\def\EMAIL#1{\href{mailto:#1}{#1}}
\def\URL#1{\href{#1}{#1}}
\def\qed{\hfill \vrule height 7pt width 7pt depth 0pt\medskip}
\newcommand{\ba}{\begin{array}}
\newcommand{\ea}{\end{array}}
\newcommand{\be}{\begin{equation}}
\newcommand{\ee}{\end{equation}}
\newcommand{\mc}{\mathcal}
\newcommand{\ov}{\overline}
\newcommand{\R}{\mathbb{R}}
\renewcommand{\P}{\mathbb{P}}
\newcommand{\se}{\text{ if }}
\def\R{\mathbb{R}}
\def\P{\mathbb{P}}
\newtheorem{definition}{Definition}
\newtheorem{example}{Example}
\newtheorem{corollary}{Corollary}
\newtheorem{lemma}{Lemma}
\newtheorem{proposition}{Proposition}
\newtheorem{remark}{Remark}
\newtheorem{theorem}{Theorem}
\title{Separable Games}
\author[1]{Laura Arditti}
\author[1,2]{Giacomo Como}
\author[1]{Fabio Fagnani}
\affil[1]{\footnotesize Dipartimento di Scienze Matematiche, Politecnico di Torino, Corso Duca degli Abruzzi 24, 10129, Torino, Italy}
\affil[2]{\footnotesize Department of Automatic Control, Lund University, BOX 118, SE-22100, Lund, Sweden}
\date{\small \EMAIL{laura.arditti@polito.it}, \EMAIL{giacomo.como@polito.it}, \EMAIL{fabio.fagnani@polito.it}\\ 
	\URL{https://laura-arditti.github.io/}, \URL{https://staff.polito.it/giacomo.como/},  \URL{http://calvino.polito.it/~fagnani/}}
\begin{document}

\maketitle

\begin{abstract}
		We present the notion of separable game with respect to a forward directed hypergraph (FDH-graph), which refines and generalizes that of graphical game. First, we show that there exists a minimal FDH-graph with respect to which a game is separable, providing a minimal complexity description for the game. Then, we prove a symmetry property of the minimal FDH-graph of potential games and we describe how it reflects to a decomposition of the potential function in terms of local functions. In particular, these last results strengthen the ones recently proved for graphical potential games. Finally, we study the interplay between separability and the decomposition of finite games in their harmonic and potential components, characterizing the separability properties of both such components.\\
		\\
		\emph{Keywords}: Graphical Games, Potential Games, Games Decomposition, Hammersley-Clifford Theorem
\end{abstract}

\section{Introduction}\label{sec:introduction}
Network games have recently emerged as a unified framework for modeling interactions in many social and economic settings \cite{Morris:2000,Jackson:2008,Easley.Kleinberg:2010,Jackson.Zenou:2015,Bramoulle.Kranton:2016}. They allow one to analyze the emergence of phenomena such as peer effects, technology adoption, spread of ideas and innovation, consensus formation, diffusion of crime or education, and commitment to public good \cite{Ballester.ea:2006,Bramoulle:2007,Bramoulle.Kranton:2007,Young:2009,Montanari.Saberi:2010,Acemoglu.ea:2015, Allouch:2015}.  What characterizes such network interactions is often a notion of locality, as the decisions of individuals are affected by the actions of their friends, colleagues, peers, or competitors. This may be formailized by the notion of graphical game \cite{Kearns.ea:2001} (see also \cite[Chapter 7]{Nisan.ea:2007}) where players are identified with nodes of a graph and the utility of each player depends only on her own action and the action of players corresponding to her neighbors in such graph. Graphical games are also a natural model in engineering and computer science to describe the interaction patterns of multi-agent systems and provide a powerful tool to design distributed algorithms \cite{Daskalakis.Papadimitriou:2006,Papadimitriou.Roughgarden:2008}.


In a graphical game, the way the utility of a player depends on the actions played by her neighbor players is a key feature that plays a crucial role in the analysis of the game (e.g., Nash equilibria, existence of a potential). In classical coordination or anti-coordination games such dependence can be seen as the sum of pairwise interactions with each single neighbor player. In other cases, as in the best-shot public good games, instead such decomposition is not possible. 

While this finer structure of the utility functions cannot be addressed within the notion of graphical game, it is at the core of the theory of separable games, that is the main focus of this paper. Separable game is a more refined notion than graphical game and allows for a finer description of the dependence pattern among the players in a game. Of particular relevance is the case of potential games for which our theory gives an exact correspondence between the separability property of a game and the decomposition of the potential function as a sum of local functions. As a corollary, we recover a remarkable result in \cite{Babichenko.Tamuz:2016} where authors prove that every potential graphical game admits a potential reflecting the graphical structure, precisely, one that can be decomposed as a sum of terms defined on the maximal cliques of the graph. While the result in \cite{Babichenko.Tamuz:2016} relies on the Hammersley-Clifford theorem for Markov random fields  \cite[Chapter 3]{Lauritzen:1996}, our derivation is completely autonomous. We actually prove that our results yield an alternative proof of the Hammersley-Clifford theorem.

The contribution of this paper is as follows. First of all, we introduce a novel notion of separability of a game with respect to a forward directed hypergraph (FDH-graph). The proposed notion of separable game encompasses and refines that of graphical game.  A key feature of our approach is that we consider games up to strategic equivalence, meaning that we are only concerned with variations of the utility of a player when she modifies her action rather then their absolute values. This approach is natural in technological contexts where the game rather then an intrinsic model is the result of an explicit design. In fact most classical game theoretic concepts such as domination, Nash equilibrium, correlated equilibrium, best-response dynamics or the logit dynamics are invariant with respect to strategic equivalence. 

Our first main result is Theorem \ref{theo:minimal-class} stating the existence of a minimal FDH-graph with respect to which a game is separable. We prove this result as a consequence of an intermediate technical result, Lemma \ref{lemma:separable-intersection}, concerning separability of a single function with respect to an undirected hypergraph (H-graph). 

Our second main contribution consists in investigating the separability of potential games. In particular, in Theorem \ref{theorem:potential} we show that the minimal FDH-graph with respect to which a potential game is separable has a symmetry property and is completely determined by the H-graph describing the minimal separability of the potential function itself. This result implies Theorems 4.1 and 4.4 in \cite{Babichenko.Tamuz:2016} as well as the Hammersley-Clifford theorem for Markov random fields. 

Finally, we analyze how the proposed notion of separability interacts with the decomposition of a game in terms of potential and harmonic parts introduced in \cite{Candogan.ea:2011}. In Theorem \ref{theo:main} we characterize the minimal FDH-graphs of the potential and harmonic parts, respectively, showing that they can sometimes be larger than the minimal FDH-graph of the original game. At the pure graphical level, this corresponds to a sort of ``interaction enlargement'', namely the appearance of short range strategic interactions that are hidden in the original game but emerge in the decomposition.

We conclude this introduction with a brief outline of the rest of this paper. Section \ref{sec:preliminaries} is devoted to introducing all the necessary graph theoretic and game theoretic concepts, including the definition of graphical game. In Section \ref{sec:separable-games} we introduce the concepts of separability of a function with respect to a H-graph and of a game with respect to a FDH-graph. We then prove our main technical result Lemma \ref{lemma:separable-intersection} that implies the existence of a minimal separating H-graph for a function (Proposition \ref{proposition:minimal-separation}). From this, we derive our first fundamental result, Theorem \ref{theo:minimal-class}, on the existence of a minimal separating FDH-graph for a game. Section \ref{sec:potential} is devoted to potential games and it contains Theorem \ref{theorem:potential} that describes the relation between the minimal separating FDH-graph associated with the game and the minimal H-graph describing the separation of the potential function.  We then show how this result
implies results on graphical potential games and on Markov random fields. In Section \ref{sec:main-results2} we present Theorem \ref{theo:main} that analyzes the relation between the separability of a game and that of its potential and harmonic components. The paper ends with a conclusive Section \ref{sec:conclusion}.

\section{Technical background}\label{sec:preliminaries}
In this section, we introduce all the necessary graph theoretic and game theoretic concepts, to be used in the rest of the paper.

\subsection{Graph-theoretic preliminaries} \label{sec:graphs} We start by introducing some basic graph-theoretic definitions and notation. We shall begin with directed graphs, then consider undirected and forward directed hypergraphs, and finally show how these notions are related.

\subsubsection{Directed graphs}A \emph{directed graph} $\mc G=(\mc V,\mc E)$ is the pair of a finite node set $\mc V$ and of a link set $\mc E\subseteq\mc V\times\mc V$, where a link $(i,j)$ in $\mc E$ is meant as directed from its tail node $i$ to its head node $j$. Throughout the paper, we shall consider directed graphs containing no self-loops, i.e., such that $(i,i)$ does not belong to the link set $\mc E$ for any node $i$ in $\mc V$, and refer to them simply as graphs. We shall denote by $\mc N_{i}=\{j\in\mc V:\,(i,j)\in\mc E\}$ and $\mc N^{\bullet}_i=\mc N_{i}\cup\{i\}$ the open and, respectively, closed out-neighborhoods of a node $i$ in a graph $\mc G=(\mc V,\mc E)$. The intersection of two graphs $\mc G_1=(\mc V,\mc E_1)$ and $\mc G_2=(\mc V,\mc E_2)$ with the same node set $\mc V$ is the graph $\mc G_1\sqcap\mc G_2=(\mc V,\mc E)$ with link set $\mc E=\mc E_1\cap\mc E_2$. We shall say that $\mc G_1=(\mc V,\mc E_1)$ is a subgraph of $\mc G_2=(\mc V,\mc E_2)$ and write $\mc G_1\preceq\mc G_2$,  if $\mc E_1\subseteq\mc E_2$, equivalently, if $\mc G_1\sqcap\mc G_2=\mc G_1$. We shall consider undirected graphs as a special case of graphs $\mc G=(\mc V,\mc E)$ such that there exists a directed link $(i,j)$ in $\mc E$ if and only if the reversed directed link $(j,i)$ in $\mc E$ exists as well. 

\subsubsection{Hypergraphs} A \emph{hypergraph} (shortly, a \emph{H-graph}) is the pair $\mc H=(\mc V,\mc L)$ of a finite node set $\mc V$ and of a set $\mc L$ of  undirected hyperlinks, each of which is a nonempty subset of nodes  \cite{Bretto:2013}.  A H-graph $\mc H=(\mc V,\mc L)$ is referred to as \emph{simple} if no undirected hyperlink $\mc J$ in $\mc L$ is strictly contained in another undirected hyperlink $\mc K$ in $\mc L$. The simple H-graph associated to a H-graph $\mc H=(\mc V,\mc L)$ is the H-graph $\ov{\mc H}=(\mc V,\ov{\mc L})$ with set of undirected hyperlinks
$$\ov{\mc L}=\{\mc J\in\mc L:\nexists \mc K\in\mc L\text{ s.t.~}\mc K\supsetneq\mc J\}\,.$$  
We shall say that a H-graph $\mc H_1 = (\mc V, \mc L_1)$ is a sub-H-graph of another H-graph $\mc H_2 = (\mc V, \mc L_2)$ and write $\mc H_1\preceq\mc H_2$ if, for every undirected hyperlink $\mc J_1$ in $\mc L_1$, there exists some undirected hyperlink $\mc J_2$ in $\mc L_2$ such that $\mc J_1\subseteq\mc J_2$. 
Notice that both $\mc H\preceq\ov{\mc H}$ and $\ov{\mc H}\preceq\mc H$ and that in fact $\ov{\mc H}$ is the only simple H-graph with this property. 
Also, observe that $\mc H_1\preceq\mc H_2$ if and only if $\ov{\mc H_1}\preceq\ov{\mc H_2}$. Given two H-graphs  $\mc H_1 = (\mc V, \mc L_1)$ and $\mc H_2 = (\mc V, \mc L_2)$ with the same node set $\mc V$, the intersection of $\mc H_1$ and $\mc H_2$ is  the H-graph $\mc H_1\sqcap\mc H_2=(\mc V,\mc L)$ with set of undirected hyperlinks $$\mc L=\{\mc J=\mc J_1\cap\mc J_2\,:\ \mc J_1\in\mc L_1,\,\mc J_2\in\mc L_2\}\,.$$
The union of $\mc H_1$ and $\mc H_2$ is instead defined as the H-graph $\mc H_1\sqcup\mc H_2=(\mc V,\mc L_1\cup\mc L_2)$.

\subsubsection{Forward directed hypergraphs} The following is a generalization of both notions of graphs and H-graphs above.  A \emph{forward directed hypergraph} (\emph{FDH-graph}) \cite{Gallo.ea:1993} is the pair $\mc F = (\mc V, \mc D)$ of a finite node set $\mc V$ and of a finite hyperlink set $\mc D$, where each hyperlink $d=(i,\mc J)$ in $\mc D$ is the ordered pair of a node $i$ in $\mc V$ (to be referred to as its tail node) and a nonempty subset of head nodes $\mc J\subseteq\mc V\setminus\{i\}$ (to be referred to as the hyperlink's head set). A FDH-graph $\mc F = (\mc V, \mc D)$ is simple if for every hyperlink $(i,\mc J)$ in $\mc D$, there exists no other hyperlink $(i,\mc K)$ in $\mc D$ such that $\mc J\subsetneq\mc K$. The simple FDH-graph associated to a FDH-graph $\mc F = (\mc V, \mc D)$  is the FDH-graph $\ov{\mc F} = (\mc V,\ov{\mc D})$  with hyperlink set $$\ov{\mc D}=\left\{(i,\mc J)\in\mc D:\,\nexists (i,\mc K)\in\mc D\text{ s.t. }\mc K\supsetneq\mc J\right\}\subseteq\mc D$$ containing only hyperlinks in $\mc D$ with maximal head node set. 
We shall say that a FDH-graph $\mc F_1 = (\mc V, \mc D_1)$ is a sub-FDH-graph of another FDH-graph $\mc F_2 = (\mc V, \mc D_2)$ and write $\mc F_1\preceq\mc F_2$ if for every hyperlink $(i,\mc J_1)$ in $\mc D_1$ there exists some hyperlink $(i,\mc J_2)$ in $\mc D_2$ such that $\mc J_1\subseteq\mc J_2$. Notice that both $\mc F\preceq\ov{\mc F}$ and $\ov{\mc F}\preceq\mc F$ and that in fact $\ov{\mc F}$ is the only FDH-graph with this property. Also, observe that $\mc F_1\preceq\mc F_2$ if and only if $\ov{\mc F_1}\preceq\ov{\mc F_2}$. The intersection of two FDH-graphs  $\mc F_1 = (\mc V, \mc D_1)$ and $\mc F_2 = (\mc V, \mc D_2)$ with the same node set $\mc V$ is the FDH-graph $\mc F_1\sqcap\mc F_2=(\mc V,\mc D)$ with set of directed hyperlinks \be\label{intersection-hyperlinks}\mc D=\{(i,\mc J):\, \exists (i,\mc J_1)\in\mc D_1, (i,\mc J_2)\in\mc D_2\text{ s.t. }\mc J_1\cap\mc J_2=\mc J\}\,.\ee

\subsubsection{How the various concepts are related}
A directed graph $\mc G=(\mc V,\mc E)$ can naturally be seen as a FDH-graph on $\mc V$ whose hyperlinks are the original links in the graph $(i,j)$ interpreted as $(i,\{j\})$; with slight abuse of notation in the following we will identify such FDH-graph with the original graph $\mc G$. 

There is another natural way of relating directed graphs and FDH-graphs, that will play a key role in the following. On the one hand, to a directed graph $\mc G=(\mc V,\mc E)$, we can associate the FDH-graph
\be\label{FG}\mc F^{\mc G}=(\mc V,\mc D^{\mc G}),\qquad \mc D^{\mc G}=\{(i, \mc N_i)\,|\, i\in\mc V\}\,,\ee
on the other hand, for a FDH-graph $\mc F=(\mc V,\mc D)$ we can consider the directed graph 
\be\label{GF}\mc G^{\mc F}=(\mc V,\mc E^{\mc F}),\qquad \mc E^{\mc F}=\{(i, j)\,|\, \exists (i,\mc J)\in\mc D,\, j\in\mc J\}\,.
\ee
Observe that the following relations hold true:
\be\label{GFrelation} \mc G=\mc G^{\left(\mc F^{\mc G}\right)},\quad \mc F\preceq \mc F^{\left(\mc G^{\mc F}\right)}\,.\ee

Analogously, on the one hand to every H-graph $\mc H=(\mc V,\mc L)$ we can associate the  FDH-graph 
\be\label{FH}\mc F^{\mc H}=(\mc V,\mc D^{\mc H}),\qquad \mc D^{\mc H}=\{(i, \mc J)\,|\, i\in\mc V\setminus\mc J, \, \{i\}\cup\mc  J\in\mc L\}\,,\ee
on the other hand to every FHD-graph $\mc F=(\mc V,\mc D)$ we can associate the H-graph 
\be\label{HF}\mc H^{\mc F}=(\mc V,\mc L^{\mc F}),\quad \mc L^{\mc F}=\{\{i\}\cup\mc J:\,(i,\mc J)\in\mc D\}\,,\ee
so  that the following relations hold true:
\be\label{HFrelation} \mc H=\mc H^{\left(\mc F^{\mc H}\right)},\quad \mc F\preceq \mc F^{\left(\mc H^{\mc F}\right)}\,.\ee
FDH-graphs $\mc F=(\mc V,\mc D)$ that are derived from H-graphs in the sense that $\mc F=\mc F^{\mc H}$ for some H-graph $\mc H$ are referred to as undirected as they are characterized by the property that 
$$(i,\mc J)\in\mc D\;\Rightarrow\; (j,\{i\}\cup\mc J\setminus\{j\})\in\mc D\,,\quad\forall j\in \mc J\,.$$ 
The FDH-graph $\mc F^{\left(\mc H^{\mc F}\right)}$ is called the underlying undirected FDH graph associated with $\mc F$ and, for notational simplicity, it is denoted by $\mc F^{\leftrightarrow}=(\mc V,\mc D^{\leftrightarrow})$. The set of directed hyperlinks can be characterized as
\be\label{underlying-def}\mc D^{\leftrightarrow}=\{(i,\mc J):\,\{i\}\cup\mc J=\{h\}\cup\mc K\text{ for some }(h,\mc K)\in\mc D\}\,.\ee

\subsection{Game-theoretic preliminaries}\label{sec:games}
Throughout the paper, we shall consider strategic form games with 
finite nonempty player set $\mc V$ and a nonempty action set $\mc A_i$ for each player $i$ in $\mathcal{V}$.  We shall denote by $\mathcal{X}= \prod_{i \in \mathcal{V}} \mc A_i$ the space of all players' strategy profiles and, for every player $i$ in $\mc V$, we let $\mathcal{X}_{-i}= \prod_{j \in \mathcal{V}\setminus\{i\}} \mc A_j$ to be the set of strategy profiles of all players except for player $i$. 
As customary, for a strategy profile $x$ in $\mc X$, the strategy profile of all players except for $i$ is denoted by $x_{-i}$ in $\mc X_{-i}$, while $x_{\mc J}$ denotes the strategy profiles restricted to a subset $\mc J\subseteq \mc V$. We shall refer to two strategy profiles $x$ and $y$  in $\mc X$ as  $i$-comparable and write $x\sim_i y$ when $x_{-i}=y_{-i}$, i.e., when $x$ and $y$ coincide except for possibly in their $i$-th entry.  

We let each player $i$ in $\mathcal{V}$ be equipped with a utility function $u_i:\mathcal{X}\to \mathbb{R}\,.$ We shall identify a game with player set $\mc V$ and strategy profile space $\mc X$ with the vector $u$ assembling all the players' utilities. Notice that, in this way, the set of all games with player set $\mc V$ and strategy profile space $\mc X$, to be denoted as $\mc U$, is isomorphic to the vector space $\R^{\mc V\times\mc X}$.

A game $u$ is referred to as \emph{non-strategic} if the utility of each player $i$ in $\mc V$ does not depend on her own action,  i.e., if
\be\label{eq:non-strategic}	u_i(x)=u_i(y)\,,\qquad\forall x,y\in\mc X\text{ s.t.~}y\sim_i x\,.\ee
Two games $u$ and $\tilde u$ are referred to as \emph{strategically equivalent} if their difference is a non-strategic game. Strategic equivalence is an equivalence relation on $\mc U$. In this paper, we will focus on properties of a game that are invariant with respect to strategic equivalence. 

A game $u$ in $\mc U$ is referred to as an (exact) \emph{potential game} \cite{Monderer.Shapley:1996} if there exists a \emph{potential function} $\phi:\mathcal{X}\to\mathbb{R}$ such that 
\be\label{eq:potential}u_i(x)-u_i(y)=\phi(x)-\phi(y)\,,\ee 
for every player $i$ in $\mc V$ and every pair of $i$-comparable strategy profiles $x\sim_i y$ in $\mc X$. Observe that \eqref{eq:potential} may be rewritten as $u_i(x)-\phi(x)=u_i(y)-\phi(y)$ for every $x\sim_i y$ in $\mc X$ which is in turn equivalent to that $u_i(x)-\phi(x)=n_i(x_{-i})$ does not depend on $x_i$ for every strategy profile $x$ in $\mc X$. Hence, $u$ in $\mc U$ is a potential game with potential $\phi(x)$ if and only if it is strategically equivalent to a game $u^{\phi}$ in $\mc U$ with utilities $u^{\phi}_i(x)=\phi(x)$ for every player $i$ in $\mc V$.

\subsection{Graphical games} \label{sec:graphical-games}

Graphical games \cite{Kearns.ea:2001} are defined, with respect to a fixed graph $\mc G$, imposing that the utility of each player $i$ only depends on the action configuration in its closed neighborhood. We slightly depart from this and we assume that this holds up to non-strategic parts. This allows us a much more compact and clear presentation of our results. A similar point of view has been already considered in the literature \cite{Babichenko.Tamuz:2016}. The formal definition is the following.
%
%
%
%
\begin{definition}\label{def:graphical-game}
Given a graph  $\mc G=(\mc V,\mc E)$, a game $u$ is said to be \emph{graphical} with respect to $\mc G$, or to be a $\mc G$-game, if 
	the utility of each player $i$ in $\mc V$ 
	can be decomposed as  
	\be\label{eq:utility-decomposition-graphical}u_i(x)=v_i(x_i,x_{\mc N_i})+n_i(x_{-i})\,,\ee
	where  $v_i:\mc A_i\times\prod_{j\in\mc N_i}\mc A_j\to\R$ is a function that depends on the action of player $i$ and of players in the subset $\mc N_i$ only, while $n_i:\mc X_{-i}\to\R$ is a non-strategic component that does not depend on the action of player $i$. 
\end{definition} \medskip
By definition, the notion of graphicality introduced above is invariant with respect to strategic equivalence.

Notice that if a game $u$ is graphical with respect to two graphs $\mc G_1 = (\mathcal{V}, \mathcal{E}_1)$ and $\mathcal{G}_2 = (\mathcal{V}, \mc{E}_2)$, it is also graphical with respect to their intersection $\mc G_1\sqcap\mc G_2$. Since every game is trivially graphical on the complete graph on $\mc V$, we can conclude that to each game $u$ in $\mc U$ one can always associate the smallest graph on which $u$ is graphical. We shall refer to such graph as the \emph{graph of the game} $u$ and denote it as $\mc G_u$.

In fact, important classes of graphical games have finer separability properties.  
In particular, for a given graph $\mc G=(\mc V,\mc E)$, a \emph{pairwise-separable network game} (cf.~\cite{Daskalakis.Papadimitriou:2009,Cai.Daskalakis:2011}) on $\mc G$ is such that the utility of player $i$ in $\mc V$ can be decomposed in the form 
\be\label{pairwise-game}
u_i(x) = \sum_{j \in \mc N_{i}} u_{ij}(x_i,x_j) \qquad \forall x \in \mathcal{X}\,,
\ee
where $u_{ij}:\mc A_i\times\mc A_j\to\R$ for $i,j$ in $\mc E$. 
These games are also known as polymatrix games on $\mc G$ \cite{Yanovskaya:1968} and are a special case of $\mc G$-games.
In the special case when $\mc G$ is undirected, 
such games can be interpreted as if each pair of players $i,j$ connected by a link were involved in a two-player game having utility functions, respectively, $u_{ij}(x_i,x_j)$ and $u_{ji}(x_j,x_i)$. 
Each player $i$ in $\mc V$  chooses the same action $x_i$ in $\mc A_i$ for all the two-player games it is engaged in and gets a utility that is the aggregate of the utilities of all such games.

In the next section, we will study a general notion of separability of games for which the pairwise-separable network games are a special case. We end this section with two examples.

\begin{example}[Network coordination game]\label{example:coord.game} For a graph $\mc G=(\mc V,\mc E)$, a \emph{network coordination game} on $\mc G$ is a game $u$ where every player $i$ in $\mc V$ has binary action set $\mc A_i=\{0,1\}$ and utility function
\be\label{eq:coord.game}u_i(x)=\sum_{j\in\mc N_i}\zeta(x_i,x_j)\,,\ee
where $\zeta(x_i,x_j)=\zeta(x_j,x_i)$ is a symmetric function such that $\zeta(0,0)\ge\zeta(0,1)=\zeta(1,0)$ and $\zeta(1,1)\ge\zeta(0,1)=\zeta(1,0)$.
Clearly, every network coordination game with utilities \eqref{eq:coord.game} is a pairwise-separable game on $\mc G$. Moreover, if the graph $\mc G$ is undirected, a network coordination game on $\mc G$ is a potential game with potential function 
$$\phi(x)=\frac12\sum_{(i,j)\in\mc E}\zeta(x_i,x_j)\,.$$ 
%
%
\end{example}

\begin{example}[Best-shot public good game]
	\label{example: public good}
	Consider a graph $\mc G$ and the game where every player $i$ in $\mc V$ has binary action set $\mc A_i=\{0,1\}$ and utility:
	$$
	u_i(x)=\left\{\ba{rcl}
	1-c &\se& x_i=1  \\
	1 &\se& x_i=0  \text{ and } x_j=1 \text{ for some } j \in \mc N_i \\
	0 &\se& x_i=0  \text{ and } x_j=0 \text{ for every } j \in \mc N_i\,. 
	\ea\right.$$
	The game $u$ constructed in this way is the so called ``public good game''.  It models a more complex behaviour for the population $\mc V$: players benefit form acquiring some good, represented by taking action $1$ and which is public in the sense that it can be lent from one player to another. Taking action $1$ has a cost $c$, so players would prefer that one of their neighbors takes that action, but taking the action and paying the cost is still the best choice if no one of their neighbors does. 
	The public good game is a graphical game on $\mc G$ but it is not pairwise-separable. 
%
\end{example}

%
%
%

\section{Separability and minimal representation of games}\label{sec:separable-games}

In this section, we introduce the notions of \emph{separable function} with respect to a H-graph and \emph{separable game} with respect to a FDH-graph. Then, we prove that every function $f:\mc X\to\R$ and every game $u$ in $\mc U$ admit, respectively, a minimal H-graph and a minimal FDH-graph with respect to which they are separable. 

Throughout the section, we shall assume to have fixed a finite player set $\mc V$, nonempty action sets $\mc A_i$ for every player $i$ in $\mc V$, and let $\mathcal{X}= \prod_{i \in \mathcal{V}} \mc A_i$. We recall that $\mc U$ stands for the set of games with player set $\mc V$ and strategy profile set $\mc X$.

\subsection{Separable functions}
We start by introducing the following notion of separability for a function defined on product spaces. 

\begin{definition}\label{def:separable-function}
A function\footnote{Every statement and reasoning in this subsection continues to hold true for function $f:\mc X\to\mc Z$, where $\mc Z$ is an arbitrary Abelian group.} $f:\mc X \to\R$ is \emph{$\mc H$-separable}, where $\mc H=(\mc V,\mc L)$ is a H-graph,  
	if there exist functions 
	$f_{\mc J}:\prod_{j \in \mc J} \mc A_j\to\R$, for $\mc J$ in $\mc L$, such that 
	\be\label{eq:separable-function}f(x) = \sum_{\mc J\in\mc L} f_{\mc J}(x_{\mc J})\,,\qquad \forall x\in\mc X\,.\ee
\end{definition}\medskip
Separability of a function $f:\mc X\to\R$ with respect to a H-graph $\mc H$ thus consists in decomposability of $f(x)$ as a sum of functions each depending exclusively on the variables $x_{\mc J}$ associated to an undirected hyperlink $\mc J$ of $\mc H$. 

%
%
Notice that every function $f:\mc X\to\R$ is $\mc H$-separable with respect to the trivial H-graph $\mc H=(\mc V,\{\mc V\})$ having a unique undirected hyperlink consisting of all nodes.  
Moreover, given two H-graphs $\mc H_1$ and $\mc H_2$ such that $\mc H_1\preceq\mc H_2$, we have that if $f$ is $\mc H_1$-separable, then it is also $\mc H_2$-separable. In particular, a function $f$ is $\mc H$-separable if and only if it is $\ov{\mc H}$-separable, where we recall that $\ov{\mc H}$ is the simple H-graph associated to $\mc H$. 

The following fundamental technical result will be instrumental to all our future derivations.
\begin{lemma}\label{lemma:separable-intersection}
Let a function $f:\mc X\to\R$ be both $\mc H_1$-separable and $\mc H_2$-separable for two H-graphs $\mc H_1$ and $\mc H_2$.  Then, $f$ is also $\mc H$-separable, where $\mc H=\mc H_1 \sqcap \mc H_2$.
\end{lemma}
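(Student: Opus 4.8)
\medskip\noindent\textbf{Proof strategy.} The plan is to decompose $f$ into ``interaction terms'' indexed by subsets of $\mc V$ — the combinatorial core of the Hammersley--Clifford argument — and to show that every surviving term is supported on a hyperlink common to both $\mc H_1$ and $\mc H_2$. Fix once and for all a reference action $o_i\in\mc A_i$ for each $i\in\mc V$, and let $E_i:\R^{\mc X}\to\R^{\mc X}$ be the operator that sets the $i$-th coordinate to $o_i$, i.e.\ $(E_ig)(x)$ equals $g$ evaluated at the profile agreeing with $x$ everywhere except in entry $i$, where it is $o_i$. The operators $E_i$ are commuting idempotents, and the two elementary facts I will use are: (i) $E_ig=g$ if and only if $g$ does not depend on its $i$-th coordinate; and (ii) more generally, $E_ig=g$ whenever $g$ depends only on $x_{\mc J}$ for some $\mc J$ with $i\notin\mc J$. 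For $\mc S\subseteq\mc V$ define
\[
f^{\mc S}\;:=\;\prod_{i\in\mc S}(I-E_i)\prod_{j\in\mc V\setminus\mc S}E_j\;f
\qquad\Bigl(\text{equivalently }f^{\mc S}(x)=\textstyle\sum_{\mc T\subseteq\mc S}(-1)^{|\mc S\setminus\mc T|}f(x^{\mc T})\Bigr),
\]
where $x^{\mc T}$ is the profile with entries $x_i$ for $i\in\mc T$ and $o_i$ otherwise. Since the $E_j$ commute and are idempotent, $E_jf^{\mc S}=f^{\mc S}$ for $j\notin\mc S$, so $f^{\mc S}$ depends only on $x_{\mc S}$; and expanding the telescoping identity $\prod_{i\in\mc V}\bigl(E_i+(I-E_i)\bigr)=I$ gives $f=\sum_{\mc S\subseteq\mc V}f^{\mc S}$.

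The key observation is a vanishing lemma: if a function $g$ depends only on $x_{\mc J}$, then $\prod_{i\in\mc S}(I-E_i)\prod_{j\in\mc V\setminus\mc S}E_j\,g=0$ whenever $\mc S\not\subseteq\mc J$. Indeed, picking $k\in\mc S\setminus\mc J$, fact (ii) gives $(I-E_k)g=0$, and since $(I-E_k)$ is one of the commuting factors of the product, the whole product vanishes. Now I apply this to both separable representations of $f$. Writing $f=\sum_{\mc J\in\mc L_1}f_{\mc J}$ with each $f_{\mc J}$ depending only on $x_{\mc J}$, linearity of the operator $g\mapsto\prod_{i\in\mc S}(I-E_i)\prod_{j}E_j\,g$ and the vanishing lemma give $f^{\mc S}=0$ unless $\mc S\subseteq\mc J$ for some $\mc J\in\mc L_1$; the $\mc H_2$-representation gives likewise $f^{\mc S}=0$ unless $\mc S\subseteq\mc K$ for some $\mc K\in\mc L_2$. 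Hence $f^{\mc S}=0$ unless $\mc S\subseteq\mc J\cap\mc K$ for some $\mc J\in\mc L_1$, $\mc K\in\mc L_2$, that is, unless $\mc S$ is contained in some hyperlink of $\mc H:=\mc H_1\sqcap\mc H_2$.

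To conclude, for each $\mc S$ with $f^{\mc S}\neq0$ fix one hyperlink $\mc K(\mc S)\in\mc L(\mc H)$ with $\mc S\subseteq\mc K(\mc S)$, and set $f_{\mc K}:=\sum_{\mc S:\,\mc K(\mc S)=\mc K}f^{\mc S}$ for $\mc K\in\mc L(\mc H)$ (taking $f_{\mc K}:=0$ when the index set is empty, and absorbing the constant term $f^{\emptyset}$ into any fixed $f_{\mc K}$). Each $f_{\mc K}$ depends only on $x_{\mc K}$, since every $f^{\mc S}$ it aggregates depends only on $x_{\mc S}$ with $\mc S\subseteq\mc K$, and $\sum_{\mc K\in\mc L(\mc H)}f_{\mc K}=\sum_{\mc S\subseteq\mc V}f^{\mc S}=f$. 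This is exactly the definition of $\mc H$-separability of $f$, as required.

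The step I expect to demand the most care is setting up the interaction decomposition correctly — in particular verifying that $f^{\mc S}$ genuinely depends on $x_{\mc S}$ alone, the telescoping identity $f=\sum_{\mc S}f^{\mc S}$, and the vanishing lemma — after which the reduction from $\mc H_1$ and $\mc H_2$ to $\mc H_1\sqcap\mc H_2$ and the final regrouping are routine bookkeeping. (The only genuinely degenerate situation is when all pairwise intersections of hyperlinks are empty, which forces $f$ to be constant; this case is handled directly.)
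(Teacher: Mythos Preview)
Your proof is correct, and it takes a genuinely different route from the paper's.

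You use the classical M\"obius (interaction) decomposition: fixing a reference profile and defining the commuting idempotents $E_i$, you expand $f=\sum_{\mc S\subseteq\mc V}f^{\mc S}$ with each $f^{\mc S}$ depending only on $x_{\mc S}$, and then kill every term whose support $\mc S$ is not contained in some $\mc J\cap\mc K$ with $\mc J\in\mc L_1$, $\mc K\in\mc L_2$. The logic is sound, including the step you might worry about: if $f^{\mc S}\ne0$ then some $\mc J\in\mc L_1$ contains $\mc S$ and some $\mc K\in\mc L_2$ contains $\mc S$, whence $\mc S\subseteq\mc J\cap\mc K$, which is by definition a hyperlink of $\mc H_1\sqcap\mc H_2$. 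The regrouping and the handling of the constant term $f^{\emptyset}$ are fine.

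The paper, by contrast, avoids the M\"obius decomposition entirely. It substitutes one separable representation into the other, freezes the coordinates outside $\mc J$ to obtain an identity of the form \eqref{gJ4}, and observes that this shows $f$ is separable with respect to $(\mc H_1\sqcap\mc H_2)\sqcup(\sqcap^2\mc H_1)$. The unwanted second piece is then eliminated by iterating: an induction shows separability with respect to $(\mc H_1\sqcap\mc H_2)\sqcup(\sqcap^k\mc H_1)$ for every $k$, and once $k>|\mc L_1|$ the self-intersection part is empty.

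What each buys: your argument is shorter, non-iterative, and yields a canonical decomposition of $f$; it also makes the result transparent, since the interaction terms $f^{\mc S}$ are intrinsic to $f$ and do not depend on the chosen $\mc H_i$-representations. The paper's argument is deliberately elementary in a different sense: one of the paper's advertised contributions is that Lemma~\ref{lemma:separable-intersection} (via Theorem~\ref{theorem:potential} and Corollary~\ref{cor:potential}) furnishes an \emph{alternative} proof of the Hammersley--Clifford theorem, independent of the usual inclusion--exclusion machinery. Since your proof is precisely that machinery --- the M\"obius interaction expansion is the combinatorial heart of the standard Hammersley--Clifford proof --- using it here would make the paper's later derivation of Hammersley--Clifford essentially circular. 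So your proof is a perfectly good (arguably cleaner) proof of the lemma in isolation, but the paper's iterative substitution is what allows it to claim an autonomous route to Hammersley--Clifford.
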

\begin{proof} Let $\Sigma_f$ be the family of all H-graphs $\mc H$ such that $f$ is $\mc H$-separable and, for $i=1,2$, let $\mc H_i=(\mc V,\mc L_i)$ in $\Sigma_f$ be an H-graph such that $f$ is $\mc H_i$-separable. We can then write
\be\label{gJ0}f(x)=\sum_{\mc J\in\mc L_1}g^{(1)}_{\mc J}(x_{\mc J})=\sum_{\mc K\in\mc L_2}g^{(2)}_{\mc K}(x_{\mc K})\,,\qquad\forall x\in\mc X\,.\ee
Then, for every $\mc J$ in $\mc L_1$, we have that 
\be\label{gJ1}g^{(1)}_{\mc J}(x_{\mc J})=\sum_{\mc K\in\mc L_2}g^{(2)}_{\mc K}(x_{\mc K})-\sum_{\mc I\in\mc L_1\setminus\{\mc J\}}g^{(1)}_{\mc I}(x_{\mc I})\,,\qquad\forall x\in\mc X\,.\ee
Now, observe that, since the lefthand side of \eqref{gJ1} is independent from $x_{\mc V\setminus\mc J}$, so is its righthand side. Therefore, we may rewrite  \eqref{gJ1} as
\be\label{gJ2}g^{(1)}_{\mc J}(x_{\mc J})=\sum_{\mc K\in\mc L_2}h^{(2)}_{\mc K\cap\mc J}(x_{\mc K\cap\mc J})-\sum_{\mc I\in\mc L_1\setminus\{\mc J\}}h^{(1)}_{\mc I\cap\mc J}(x_{\mc I\cap\mc J})\,,\qquad\forall x\in\mc X\,,\ee
where, for an arbitrarily chosen $y$ in $\mc X$,  
\be\label{gJ3}h^{(i)}_{\mc K\cap\mc J}(x_{\mc K\cap\mc J})=g^{(i)}_{\mc K}(x_{\mc K\cap\mc J},y_{\mc K\setminus\mc J})\,,\qquad \forall i=1,2\,, \quad\forall\mc K\in\mc L_1\cup\mc L_2\,,\quad\forall x\in\mc X\,.\ee
It then follows from \eqref{gJ0}, \eqref{gJ2} and \eqref{gJ3} that 
\be\label{gJ4}f(x)=\sum_{\mc J\in\mc L_1}\sum_{\mc K\in\mc L_2}h^{(2)}_{\mc K\cap\mc J}(x_{\mc K\cap\mc J})-\sum_{\mc J\in\mc L_1}\sum_{\mc I\in\mc L_1\setminus\{\mc J\}}h^{(1)}_{\mc I\cap\mc J}(x_{\mc I\cap\mc J})\,,\qquad\forall x\in\mc X\,.\ee

Observe that \eqref{gJ4} is not yet the desired separability decomposition because of the presence of the second term in its righthand side. However, a suitable iterative application of \eqref{gJ4} allows us to prove the claim. To formally see this, it is convenient to first introduce the following definition. 
Given a H-graph $\mc H=(\mc V,\mc L)$, let the H-graphs $\sqcap^k\mc H=(\mc V, \mc L)$ be defined by
\be\label{intersect-distinct}
 \mc L=\{J_1\cap\cdots\cap J_k\;|\; J_s\in\mc L\;\forall s,\; J_s\neq J_t\;\forall s\neq t\}\ee
 and notice that $\sqcap^2( \sqcap^k\mc H)\preceq\sqcap^{k+1}\mc H$.
We can now interpret \eqref{gJ4} as saying that 
\be\label{induction0}(\mc H_1\sqcap\mc H_2)\sqcup  (\sqcap^2\mc H_1)\in\Sigma_f \,.\ee
We now prove by induction that, for every $k\ge2$, 
\be\label{induction1}(\mc H_1\sqcap\mc H_2)\sqcup  (\sqcap^k\mc H_1)\in\Sigma_f \,.\ee
Indeed, assume that \eqref{induction1} holds true for a certain $k$ and let us prove it for $k+1$. Considering that \eqref{induction0} is true for any pair of H-graphs $\mc H_1,\mc H_2$ in $\Sigma_f$, if we apply it replacing $\mc H_1$ with $(\mc H_1\sqcap\mc H_2)\sqcup  (\sqcap^k\mc H_1)$, we obtain that
\be\label{induction2}(((\mc H_1\sqcap\mc H_2)\sqcup  (\sqcap^k\mc H_1))\sqcap\mc H_2)\sqcup (\sqcap^2((\mc H_1\sqcap\mc H_2)\sqcup  (\sqcap^k\mc H_1)))\in\Sigma_f\,.
\ee
Notice now that 
\be\label{induction3}((\mc H_1\sqcap\mc H_2)\sqcup  (\sqcap^k\mc H_1))\sqcap\mc H_2\preceq \mc H_1\sqcap\mc H_2\,,\ee
\be\label{induction4}\sqcap^2((\mc H_1\sqcap\mc H_2)\sqcup  (\sqcap^k\mc H_1))\preceq (\mc H_1\sqcap\mc H_2)\sqcup (\sqcap^2(\sqcap^k\mc H_1))\preceq (\mc H_1\sqcap\mc H_2)\sqcup (\sqcap^{k+1}\mc H_1)\,.\ee
Relations \eqref{induction2}, \eqref{induction3}, and \eqref{induction4} imply \eqref{induction1} for $k+1$. Therefore, \eqref{induction1} holds true for every value of $k$. Finally, notice that, for $k> |\mc L|$, $\sqcap^k\mc H$ is the H-graph with an empty set of hyperlinks. This proves that $\mc H_1\sqcap\mc H_2\in\Sigma_f$.
%
%
%
%
\qed\end{proof}

Lemma \ref{lemma:separable-intersection} and the foregoing considerations  motivate the following definition. 

\begin{definition}\label{def:minimal-H-graph}
A H-graph $\mc H=(\mc V,\mc L)$ is a \emph{minimal H-graph} for a function $f:\mc X \to\R$ if
$\mc H$ is simple, $f$ is $\mc H$-separable, and $\mc H\preceq\tilde{\mc H}$ for every H-graph $\tilde{\mc H}=(\mc V,\tilde{\mc L})$ such that $f$ is $\tilde{\mc H}$-separable.  
	\end{definition}\medskip

We can now prove the following result. 

\begin{proposition}\label{proposition:minimal-separation}
Every function $f:\mc X\to\R$ admits a unique minimal H-graph $\mc H_f$. 
\end{proposition}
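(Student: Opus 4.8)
The plan is to build $\mc H_f$ by taking the intersection of \emph{all} simple H-graphs with respect to which $f$ is separable, and then to verify that this intersection is itself a separating H-graph and that it is minimal. First I would let $\Sigma_f$ denote the (nonempty, by the trivial H-graph $(\mc V,\{\mc V\})$) collection of all H-graphs $\tilde{\mc H}$ such that $f$ is $\tilde{\mc H}$-separable, and set $\mc H_f=\ov{\mc H^\star}$ where $\mc H^\star=\bigsqcap_{\tilde{\mc H}\in\Sigma_f}\tilde{\mc H}$. Since $\mc V$ is finite, $\Sigma_f$ contains only finitely many \emph{simple} H-graphs (each simple H-graph is determined by an antichain of subsets of $\mc V$), and since $f$ is $\tilde{\mc H}$-separable iff it is $\ov{\tilde{\mc H}}$-separable, it suffices to intersect over this finite family of simple H-graphs; finiteness lets me apply Lemma \ref{lemma:separable-intersection} inductively to conclude $\mc H^\star\in\Sigma_f$, hence $\mc H_f=\ov{\mc H^\star}\in\Sigma_f$ as well.

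Next I would establish minimality. Let $\tilde{\mc H}$ be any H-graph with $f$ $\tilde{\mc H}$-separable; I must show $\mc H_f\preceq\tilde{\mc H}$. By replacing $\tilde{\mc H}$ with $\ov{\tilde{\mc H}}$ (legitimate since $\tilde{\mc H}\preceq\ov{\tilde{\mc H}}\preceq\tilde{\mc H}$ and $f$-separability is preserved), I may assume $\tilde{\mc H}$ is simple, so $\tilde{\mc H}$ is one of the H-graphs appearing in the intersection defining $\mc H^\star$. Then, directly from the definition of $\sqcap$, every hyperlink of $\mc H^\star$ is contained in some hyperlink of $\tilde{\mc H}$, i.e.\ $\mc H^\star\preceq\tilde{\mc H}$, and therefore $\mc H_f=\ov{\mc H^\star}\preceq\tilde{\mc H}$. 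Combined with the fact that $\mc H_f$ is simple and $f$ is $\mc H_f$-separable, this shows $\mc H_f$ is a minimal H-graph in the sense of Definition \ref{def:minimal-H-graph}.

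For uniqueness, suppose $\mc H$ and $\mc H'$ are both minimal H-graphs for $f$. Applying the defining property of $\mc H$ to the separating H-graph $\mc H'$ gives $\mc H\preceq\mc H'$, and symmetrically $\mc H'\preceq\mc H$; since both are simple, $\mc H\preceq\mc H'$ and $\mc H'\preceq\mc H$ force $\ov{\mc H}=\ov{\mc H'}$, and $\ov{\mc H}=\mc H$, $\ov{\mc H'}=\mc H'$ give $\mc H=\mc H'$. The only real work is the first paragraph: one must check carefully that intersecting over the (a priori large but finite) family of simple separating H-graphs is legitimate and that Lemma \ref{lemma:separable-intersection} applies iteratively; once that is in place, minimality and uniqueness are essentially formal manipulations of the relation $\preceq$. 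I expect the main obstacle to be nothing deep, but rather the bookkeeping needed to reduce to a finite intersection and to handle the passage between a H-graph and its simple companion $\ov{\cdot}$ cleanly.
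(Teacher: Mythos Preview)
Your proposal is correct and follows essentially the same route as the paper: intersect all separating H-graphs via $\sqcap$, invoke Lemma~\ref{lemma:separable-intersection}, and take the simple companion $\ov{\cdot}$ to obtain $\mc H_f$. The paper's proof is terser---it does not spell out the reduction to finitely many simple H-graphs or the uniqueness argument---but your additional bookkeeping on those points is sound and fills in exactly the details the paper leaves implicit.
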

\begin{proof}
Lemma \ref{lemma:separable-intersection} implies that $f$ is $\mc H$-separable, where $\mc H$ is the $\sqcap$-intersection of all H-graphs $\tilde{\mc H}=(\mc V,\tilde{\mc L})$ such that $f$ is $\tilde{\mc H}$-separable. Then, $f$ is also $\ov{\mc H}$-separable. Now, let $\mc H_f=\ov{\mc H}$ and notice that $\mc H_f\preceq\mc H\preceq\tilde{\mc H}$ for every  H-graph $\tilde{\mc H}$ such that $f$ is $\tilde{\mc H}$-separable. Since by construction $\mc H_f$ is simple, we have that $\mc H_f$ is the minimal H-graph of $f$. 
\qed\end{proof}

\subsection{Separable games}\label{sec:separable-games-def}

We now introduce the following notion of separability for a game. 
\begin{definition}\label{def:separable-game}
Given a FDH-graph  $\mc F=(\mc V,\mc D)$, a game $u$ in $\mc U$ is \emph{$\mc F$-separable}  if 
	the utility of each player $i$ in $\mc V$ 
	can be decomposed as  
	\be\label{eq:utility-decomposition-separable}u_i(x)=\sum_{(i,\mc J)\in\mc D}u_{i}^{\mc J}(x_i,x_{\mc J})+n_i(x_{-i})\,,\ee
	where  $u_i^{\mc J}:\mc A_i\times\prod_{j\in\mc J}\mc A_j\to\R$ are functions that depend on the actions of player $i$ and of players in the subset $\mc J$ of head nodes of hyperlink $(i,\mc J)$ only, while $n_i:\mc X_{-i}\to\R$ is a non-strategic component that does not depend on the action of player $i$. 
\end{definition} \medskip

Definition \ref{def:separable-game} captures not only locality of the relative influences among the players in the game, but also the fact that players may have separate interactions with different groups of other players. Up to a non-strategic component, this grouping of the player set is modeled as a FDH-graph with node set coinciding with the player set $\mc V$ and where each group jointly influencing player $i$ corresponds to a directed hyperlink with tail node $i$. 

We now make a few technical remarks on this definition. In particular, we connect the current notion of separability for games to the one previously introduced for functions as well as to the concept of graphical games.

\begin{remark}By definition, the notion of separability introduced above is invariant with respect to strategic equivalence, i.e., a game $u$ is $\mc F$-separable if and only every game $\tilde u$ that is strategically equivalent to $u$ is $\mc F$-separable. In that, Definition \eqref{def:separable-game} differs from other notions proposed in the literature, see, e.g., that of ``graphical multi-hypermatrix game'' \cite{Ortiz.Irfan:2017}. 
In every class of strategically equivalent games there exists a unique game $\ov u$, called \emph{normalized}  \cite{Candogan.ea:2011}, that satisfies the following property
\be\label{eq:normalized}
\sum_{y\sim_i x} \ov u_i(y) = 0\,,\qquad \forall x \in \mathcal{X},\ \forall i\in\mc V\,.\ee
In particular, for every finite game $u$, the normalized strategically equivalent game $\ov u$ can be obtained by the formula 
%
\be\label{eq:u-normalized}
\ov u_i(x)=u_i(x)-\frac1{|\mc A_i|} \sum_{y\sim_i x} u_i(y)\,,\qquad \forall x \in \mathcal{X},\ \forall i\in\mc V\,.\ee
Notice that for normalized games, if formula \eqref{eq:utility-decomposition-separable} holds true, it must do so with $n_i(x_{-i})\equiv0$ for every player $i$ in $\mc V$. 
\end{remark}\medskip

\begin{remark}\label{rem:local} Separability of a game can be equivalently expressed in terms of separability of the single utility functions. Indeed, given an FDH-graph $\mc F=(\mc V,\mc D)$, consider, for each $i$ in $\mc V$, the `local' H-graph  
		\be\label{H-local}\mc H_i=(\mc V,\mc L_i),\quad \mc L_i=\{\{i\}\cup\mc J:\,(i,\mc J)\in\mc D\}\cup\{\mc V\setminus\{i\}\}\,.\ee
Notice that a game $u$ in $\mc U$ is $\mc F$-separable in the sense of Definition \ref{def:separable-game} if and only if, for every $i$ in $\mc V$, the utility function $u_i$ is $\mc H_i$-separable in the sense of Definition \ref{def:separable-function}.
\end{remark}\medskip

\begin{remark}\label{rem:graph-sep} A direct consequence of Definitions \ref{def:graphical-game} and \ref{def:separable-game} and  of relations \eqref{FG} and \eqref{GF} is that, given a FDH-graph $\mc F$, every $\mc F$-separable game $u$ is graphical with respect to the graph $\mc G^{\mc F}$. Similarly, if $u$ is graphical with respect to $\mc G$, it is $\mc F^{\mc G}$-separable.
\end{remark}\medskip


A game $u$
can be $\mc F$-separable with respect to different FDH-graphs $\mc F$. In fact, if a game is $\mc F_1$-separable for a given FDH-graph $\mc F_1$, it is also ${\mc F_2}$-separable for every FDH-graph $\mc F_2$ such that $\mc F_1\preceq\mc F_2$. 
Hence, in particular, a game $u$ is $\mc F$-separable if and only if it is $\ov{\mc F}$-separable. 

A natural question, addressed below, is whether there exists a FDH-graph $\mc F$ that captures the minimal structure of a game. Such minimality property is formalized by the following definition. 

\begin{definition}\label{def:minimalFDH-graph}
	A FDH-graph $\mc F=(\mc V,\mc D)$ is the \emph{minimal } FDH-graph of a game $u$ in $\mc U$ if $\mc F$ is simple, $u$ is $\mc F$-separable, and $\mc F\preceq\tilde{\mc F}$ for every FDH-graph $\tilde{\mc F}=(\mc V,\tilde{\mc D})$ such that $u$ is $\tilde{\mc F}$-separable. 
\end{definition} \medskip

The following result states that every game admits a minimal  FDH-graph with respect to which it is separable.
\begin{theorem}\label{theo:minimal-class}
	Every game $u$ in $\mc U$ admits a unique minimal FDH-graph $\mc F_u=(\mc V,\mc D)$. 
\end{theorem}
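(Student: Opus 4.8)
The plan is to transplant the proof of Proposition~\ref{proposition:minimal-separation} from functions to games, using the per‑player reformulation of separability in Remark~\ref{rem:local}. The first and main step is a game‑theoretic analogue of Lemma~\ref{lemma:separable-intersection}: if a game $u$ is both $\mc F_1$-separable and $\mc F_2$-separable, then it is $(\mc F_1\sqcap\mc F_2)$-separable. To prove this, fix a player $i$ and let $\mc H_i^{(1)}$ and $\mc H_i^{(2)}$ be the local H-graphs of $\mc F_1$ and $\mc F_2$ defined by \eqref{H-local}. By Remark~\ref{rem:local}, $u_i$ is $\mc H_i^{(1)}$-separable and $\mc H_i^{(2)}$-separable, so Lemma~\ref{lemma:separable-intersection} gives that $u_i$ is $(\mc H_i^{(1)}\sqcap\mc H_i^{(2)})$-separable. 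It then remains to compare $\mc H_i^{(1)}\sqcap\mc H_i^{(2)}$ with the local H-graph $\mc H_i$ of $\mc F_1\sqcap\mc F_2$: a generic hyperlink of $\mc H_i^{(1)}\sqcap\mc H_i^{(2)}$ is $\mc A\cap\mc B$ with $\mc A\in\{\{i\}\cup\mc J_1:(i,\mc J_1)\in\mc D_1\}\cup\{\mc V\setminus\{i\}\}$ and $\mc B$ in the analogous set for $\mc F_2$; when both $\mc A$ and $\mc B$ contain $i$ this intersection equals $\{i\}\cup(\mc J_1\cap\mc J_2)$, which is a hyperlink of $\mc H_i$ by \eqref{intersection-hyperlinks} and \eqref{H-local}, and in every remaining case $\mc A\cap\mc B\subseteq\mc V\setminus\{i\}\in\mc L_i$. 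Hence $\mc H_i^{(1)}\sqcap\mc H_i^{(2)}\preceq\mc H_i$, and by monotonicity of separability under $\preceq$ together with Remark~\ref{rem:local} applied once more, $u$ is $(\mc F_1\sqcap\mc F_2)$-separable.

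Given this intersection property, the existence part follows exactly as in Proposition~\ref{proposition:minimal-separation}. Since $\mc V$ is finite there are only finitely many FDH-graphs on $\mc V$; let $\mc F_1,\dots,\mc F_m$ be those with respect to which $u$ is separable, a nonempty list because $u$ is trivially separable with respect to the FDH-graph with hyperlinks $(i,\mc V\setminus\{i\})$, $i\in\mc V$. Using that $\sqcap$ is commutative and associative and that $\mc A\sqcap\mc B\preceq\mc A$ for any two FDH-graphs, iterating the intersection property shows that $u$ is $\mc F$-separable for $\mc F:=\mc F_1\sqcap\cdots\sqcap\mc F_m$ and that $\mc F\preceq\mc F_k$ for every $k$. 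Put $\mc F_u:=\ov{\mc F}$. Then $\mc F_u$ is simple, $u$ is $\mc F_u$-separable (because $\mc F$-separability is equivalent to $\ov{\mc F}$-separability), and $\mc F_u\preceq\mc F\preceq\tilde{\mc F}$ for every FDH-graph $\tilde{\mc F}$ with respect to which $u$ is separable; thus $\mc F_u$ is a minimal FDH-graph of $u$ in the sense of Definition~\ref{def:minimalFDH-graph}.

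For uniqueness, let $\mc F=(\mc V,\mc D)$ and $\mc F'=(\mc V,\mc D')$ both be minimal FDH-graphs of $u$; then $\mc F\preceq\mc F'$ and $\mc F'\preceq\mc F$ and both are simple. Given $(i,\mc J)\in\mc D$, the first relation produces $(i,\mc J')\in\mc D'$ with $\mc J\subseteq\mc J'$ and the second produces $(i,\mc J'')\in\mc D$ with $\mc J'\subseteq\mc J''$; since $\mc J\subseteq\mc J''$ and $\mc F$ is simple, necessarily $\mc J=\mc J''$, hence $\mc J=\mc J'$ and $(i,\mc J)\in\mc D'$. Therefore $\mc D\subseteq\mc D'$, and symmetrically $\mc D=\mc D'$.

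The crux of the argument is the intersection property of the first step, specifically the comparison $\mc H_i^{(1)}\sqcap\mc H_i^{(2)}\preceq\mc H_i$. The one delicate point is the degenerate case in which a head‑set intersection $\mc J_1\cap\mc J_2$ is empty, so that the corresponding term of $u_i$ depends on $x_i$ alone: for the argument to close (and indeed for the very existence of a minimal FDH-graph, as one sees by considering a game in which $u_i$ is a non‑constant function of $x_i$ only) one must allow $(i,\emptyset)$ as an admissible hyperlink, equivalently, permit the singleton $\{i\}$ as a hyperlink in the local H-graph \eqref{H-local}. Once this convention is fixed, the rest is a routine transcription of the function‑level reasoning.
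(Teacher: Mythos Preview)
Your proof is correct and follows essentially the same route as the paper's: reduce the intersection‑closure for FDH‑graphs to Lemma~\ref{lemma:separable-intersection} via the local H‑graphs of Remark~\ref{rem:local}, then mimic Proposition~\ref{proposition:minimal-separation}. If anything you are more careful than the paper, which asserts that $\mc H_i^{(1)}\sqcap\mc H_i^{(2)}$ \emph{equals} the local H‑graph of $\mc F_1\sqcap\mc F_2$ while you establish the $\preceq$ relation that is actually needed, and you rightly flag the degenerate head‑set case $(i,\emptyset)$, which the paper's definitions leave implicit.
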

\begin{proof}
Let $u$ in $\mc U$ be a game that is separable with respect to two FDH-graphs $\mc F_1=(\mc V,\mc D_1)$ and $\mc F_2=(\mc V, \mc D_2)$. For every player $i$ in $\mc V$, consider the corresponding local H-graphs $\mc H^1_i$ and $\mc H^2_i$ as defined in \eqref{H-local}. Following Remark \ref{rem:local} we have that the utility function $u_i$ is $\mc H^s_i$-separable for $s=1,2$ and, consequently, because of Lemma \ref{lemma:separable-intersection}, also $\mc H^1_i\sqcap \mc H^2_i$-separable.
Since, for $i$ in $\mc V$,  $\mc H^1_i\sqcap \mc H^2_i$ are the local H-graphs associated with the intersection $\mc F_1 \sqcap \mc F_2$, using again Remark \ref{rem:local} we deduce that $u$ is $\mc F_1 \sqcap \mc F_2$-separable.
Then, the result follows arguing in the same way as in the proof of Proposition \ref{proposition:minimal-separation}. 
\qed\end{proof} 

The minimal separability and graphicality properties of a game are connected.
More specifically, the relation between the minimal FDH-graph $\mc F_u$ of a game $u$ and its minimal graph $\mc G_u$ is clarified in the following result.

\begin{corollary}\label{cor:minimal1} For every game $u$ in $\mc U$, the minimal graph $\mc G_u$ and the minimal FDH-graph $\mc F_u$ are related by $\mc G_u=\mc G^{\mc F_u}$.
\end{corollary}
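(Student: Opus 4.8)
The plan is to establish the identity by proving the two subgraph inclusions $\mc G_u\preceq\mc G^{\mc F_u}$ and $\mc G^{\mc F_u}\preceq\mc G_u$ separately, combining the minimality of $\mc G_u$ and $\mc F_u$ with Remark \ref{rem:graph-sep} and the relation $\mc G=\mc G^{\left(\mc F^{\mc G}\right)}$ from \eqref{GFrelation}. Recall that for graphs $\mc G_1\preceq\mc G_2$ simply means $\mc E_1\subseteq\mc E_2$, so both inclusions together give the desired equality of edge sets.

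For the inclusion $\mc G_u\preceq\mc G^{\mc F_u}$: since $u$ is $\mc F_u$-separable, Remark \ref{rem:graph-sep} gives that $u$ is graphical with respect to $\mc G^{\mc F_u}$; by minimality of $\mc G_u$ among all graphs on which $u$ is graphical, it follows that $\mc G_u\preceq\mc G^{\mc F_u}$. For the reverse inclusion $\mc G^{\mc F_u}\preceq\mc G_u$: since $u$ is graphical with respect to $\mc G_u$, Remark \ref{rem:graph-sep} gives that $u$ is $\mc F^{\mc G_u}$-separable, and minimality of $\mc F_u$ then yields $\mc F_u\preceq\mc F^{\mc G_u}$. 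The one extra ingredient needed is the monotonicity of the operator $\mc F\mapsto\mc G^{\mc F}$ with respect to $\preceq$: if $\mc F_1=(\mc V,\mc D_1)\preceq\mc F_2=(\mc V,\mc D_2)$ and $(i,j)\in\mc E^{\mc F_1}$, then there is $(i,\mc J_1)\in\mc D_1$ with $j\in\mc J_1$, hence some $(i,\mc J_2)\in\mc D_2$ with $\mc J_1\subseteq\mc J_2$, so $j\in\mc J_2$ and $(i,j)\in\mc E^{\mc F_2}$; thus $\mc G^{\mc F_1}\preceq\mc G^{\mc F_2}$. Applying this to $\mc F_u\preceq\mc F^{\mc G_u}$ and then invoking $\mc G^{\left(\mc F^{\mc G_u}\right)}=\mc G_u$ from \eqref{GFrelation} gives $\mc G^{\mc F_u}\preceq\mc G^{\left(\mc F^{\mc G_u}\right)}=\mc G_u$, as required.

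Putting the two inclusions together yields $\mc G_u=\mc G^{\mc F_u}$. I do not expect any genuine obstacle here: the argument is a short ``minimality against minimality'' sandwich, and the only non-immediate point is the monotonicity of $\mc F\mapsto\mc G^{\mc F}$, which is a one-line set-theoretic verification; care is only needed to invoke the correct half of \eqref{GFrelation} (namely the exact identity $\mc G=\mc G^{\left(\mc F^{\mc G}\right)}$, rather than the weaker relation $\mc F\preceq\mc F^{\left(\mc G^{\mc F}\right)}$).
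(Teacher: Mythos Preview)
Your proof is correct and follows essentially the same approach as the paper's own proof: both use Remark~\ref{rem:graph-sep} to obtain that $u$ is a $\mc G^{\mc F_u}$-game and is $\mc F^{\mc G_u}$-separable, then combine the two minimalities with the identity $\mc G^{\left(\mc F^{\mc G}\right)}=\mc G$ from~\eqref{GFrelation}. The only difference is that you spell out explicitly the monotonicity of $\mc F\mapsto\mc G^{\mc F}$, which the paper uses tacitly in the implication $\mc F_u\preceq\mc F^{\mc G_u}\Rightarrow\mc G^{\mc F_u}\subseteq\mc G^{\left(\mc F^{\mc G_u}\right)}$.
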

\begin{proof} 
We know from Remark \ref{rem:graph-sep} that $u$ is a $\mc F^{\mc G_u}$-separable $\mc G^{\mc F_u}$-game. From this and the relations \eqref{GFrelation} we have that 
$$\mc G_u\subseteq \mc G^{\mc F_u},\quad \mc F_u\preceq \mc F^{\mc G_u}\;\Rightarrow\; \mc G^{\mc F_u}\subseteq \mc G^{\left(\mc F^{\mc G_u}\right)}=\mc G_u \,.$$
This concludes the proof.\qed
\end{proof}

\subsection{Examples} 
As discussed earlier, pairwise-separable network games with respect to a graph $\mc G$ are $\mc F$-separable with respect to the FDH-graph $\mc F$ obtained interpreting $\mc G$ as a FDH-graph. Below we analyze two particular graphical games that are not pairwise-separable, but rather separable with respect to a different FDH-graph.

\begin{example}[Two-level coordination game]\label{example:two-level coord} We consider the following variation of the network coordination game presented in Example \ref{example:coord.game}. We fix a set of players $\mc V$ and the same action set for all players: $\mc A=\mc A_i=\{0,1\}$ for all $i$ in $\mc V$. For every pair of nodes $i,j$ we consider functions $u_{ij}:\mc A^2\to\R$ defined as the pairwise utility function $\zeta$ of the network coordination game \eqref{eq:coord.game}. 
We now consider an undirected graph $\mc G=(\mc V,\mc E)$ and, for every $i$ in $\mc V$, functions $\tilde u_i:\mc A^{\mc N^{\bullet}_i}\to\R$ given by
$$\tilde u_i(x)=\left\{\begin{array}{ll} L \quad&{\rm if}\; x_i=x_k\, \forall k\in \mc N_i\\
0 \quad&{\rm otherwise}\end{array}\right.$$
where $L>0$. We finally define the utility of player $i$ as:
$$u_i(x)=\sum\limits_{j\neq i}u_{ij}(x_i, x_j)+\tilde u_i(x_{\mc N^{\bullet}_i})$$
The interpretation is the following: each agent has a benefit that is in part linearly proportional to the number of individuals playing the same action and, additionally, it has an extra value $L$ if the agent's action is in complete agreement with her neighbors. This type of utility function models, for example, the situation where player's action represents the acquisition of a new technology and the benefit to a player comes from two channels: the range of diffusion of the technology in the whole population and the opportunity to use such technology with her strict collaborators.
If we consider the FDH-graph $\mc F=(\mc V, \mc D)$ where
$$\mc D=\{(i, \{j\})\,,\, j\neq i\}\cup\{(i, \mc N_i),\; i\in\mc V\}$$
we have that the game $u$ is $\mc F$-separable. Notice that this is not the minimal FDH-graph for $u$ as the minimal one is
$\mc F_u=(\mc V, \mc D_u)=\ov{\mc F}$ where
$$\mc D_u=\{(i, \{j\})\,,\, j\not \in\mc N^{\bullet}_i\}\cup\{(i, \mc N_i),\; i\in\mc V\}\,.$$
\end{example}

\section{On the structure of potential games}\label{sec:potential}
In this section we focus on potential games and study how their separability properties are intertwined with the separability of the corresponding potential functions. This is the content of out next result Theorem \ref{theorem:potential}. We then derive, as a corollary, results on graphical potential games first appeared in \cite{Babichenko.Tamuz:2016} and we provide an alternative proof of the Hammersley-Clifford theorem for Markov random fields.
%
%
%

\begin{theorem}\label{theorem:potential}
Let $u$ in $\mc U$ be a potential game with potential function $\phi$. Then, the minimal FDH-graph of $u$ is the undirected FDH-graph associated to the minimal H-graph of $\phi$, i.e., 
\be\label{potential-separable}\mc F_u=\mc F^{\mc H_{\phi}}\ee
\end{theorem}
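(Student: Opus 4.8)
The plan is to pass to the canonical representative of the strategic-equivalence class of $u$ and then to read the separable decomposition of the game off the H-graph decomposition of $\phi$, one player at a time. Since $u$ is a potential game with potential $\phi$, it is strategically equivalent to the game $u^\phi$ with $u^\phi_i(x)=\phi(x)$ for every $i\in\mc V$; as $\mc F$-separability (hence the minimal FDH-graph) is a strategic-equivalence invariant, $\mc F_u=\mc F_{u^\phi}$, so it suffices to prove $\mc F_{u^\phi}=\mc F^{\mc H_\phi}$. The workhorse throughout is Remark~\ref{rem:local}: for any FDH-graph $\mc F=(\mc V,\mc D)$, the game $u^\phi$ is $\mc F$-separable if and only if, for every $i\in\mc V$, the function $\phi$ (which is $u^\phi_i$) is separable with respect to the local H-graph $\mc H_i(\mc F)=(\mc V,\{\{i\}\cup\mc J:(i,\mc J)\in\mc D\}\cup\{\mc V\setminus\{i\}\})$. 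Thus the minimality statements for $u^\phi$ and for $\phi$ concern the same decompositions viewed from the two sides of this bridge.

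\emph{First inclusion $\mc F_{u^\phi}\preceq\mc F^{\mc H_\phi}$.} For each $i$, unwind \eqref{FH} to see that the hyperlinks of $\widetilde{\mc H}_i:=\mc H_i(\mc F^{\mc H_\phi})$ are exactly the sets $\mc M\in\mc L_\phi$ with $i\in\mc M$ (each coming from $(i,\mc M\setminus\{i\})\in\mc D^{\mc H_\phi}$; here it is convenient, and harmless, to allow an empty head set when $\mc M=\{i\}$, so that univariate summands of $\phi$ are covered uniformly), together with $\mc V\setminus\{i\}$. Then $\mc H_\phi\preceq\widetilde{\mc H}_i$: any $\mc M\in\mc L_\phi$ with $i\in\mc M$ is itself a hyperlink of $\widetilde{\mc H}_i$, while any $\mc M\in\mc L_\phi$ with $i\notin\mc M$ satisfies $\mc M\subseteq\mc V\setminus\{i\}$. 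Hence $\phi$, being $\mc H_\phi$-separable, is $\widetilde{\mc H}_i$-separable for every $i$, so by Remark~\ref{rem:local} the game $u^\phi$ is $\mc F^{\mc H_\phi}$-separable, and minimality of $\mc F_{u^\phi}$ (Theorem~\ref{theo:minimal-class}) gives $\mc F_{u^\phi}\preceq\mc F^{\mc H_\phi}$.

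\emph{Second inclusion $\mc F^{\mc H_\phi}\preceq\mc F_{u^\phi}$.} Write $\mc F_{u^\phi}=(\mc V,\mc D_u)$ and let $\mc H_i:=\mc H_i(\mc F_{u^\phi})$. Since $u^\phi$ is $\mc F_{u^\phi}$-separable, Remark~\ref{rem:local} gives that $\phi$ is $\mc H_i$-separable for every $i$, whence $\mc H_\phi\preceq\mc H_i$ by minimality of $\mc H_\phi$ (Proposition~\ref{proposition:minimal-separation}). Now take any hyperlink $(i,\mc K)\in\mc D^{\mc H_\phi}$: then $\{i\}\cup\mc K\in\mc L_\phi$, so $\mc H_\phi\preceq\mc H_i$ produces $\mc M\in\mc L_i$ with $\{i\}\cup\mc K\subseteq\mc M$; since $i\in\mc M$, the hyperlink $\mc M$ cannot be $\mc V\setminus\{i\}$, so $\mc M=\{i\}\cup\mc J$ for some $(i,\mc J)\in\mc D_u$ and therefore $\mc K\subseteq\mc J$. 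This is precisely $\mc F^{\mc H_\phi}\preceq\mc F_{u^\phi}$. Finally, $\mc F^{\mc H_\phi}$ is simple (a strict containment of head sets sharing a tail would lift through \eqref{FH} to a strict containment in the simple H-graph $\mc H_\phi$) and $\mc F_{u^\phi}=\mc F_u$ is simple by definition; since two mutually comparable simple FDH-graphs coincide (the uniqueness noted right after the definition of $\ov{\mc F}$, applied with $\mc F_u\preceq\mc F^{\mc H_\phi}\preceq\mc F_u$), we conclude $\mc F^{\mc H_\phi}=\ov{\mc F_u}=\mc F_u$, which is \eqref{potential-separable}.

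I expect the only genuine difficulty to be the combinatorial bookkeeping in the two inclusions: matching, for each tail node $i$, the hyperlinks of $\mc L_\phi$ containing $i$ with the hyperlinks of $\mc F^{\mc H_\phi}$ having tail $i$, and checking that the single ``non-strategic slot'' $\mc V\setminus\{i\}$ of the local H-graph absorbs exactly the hyperlinks of $\mc L_\phi$ avoiding $i$. A minor point to state carefully is the treatment of purely univariate summands of $\phi$ (singleton hyperlinks $\{i\}\in\mc L_\phi$), which are strategic only for player $i$ and correspond in $\mc F^{\mc H_\phi}$ to a hyperlink with tail $i$ and empty head set, consistently with \eqref{FH}.
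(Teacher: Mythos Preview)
Your proof is correct and follows essentially the same approach as the paper's: both establish the two inclusions $\mc F_u\preceq\mc F^{\mc H_\phi}$ and $\mc F^{\mc H_\phi}\preceq\mc F_u$ by (i) reading the $\mc H_\phi$-decomposition of $\phi$ as an $\mc F^{\mc H_\phi}$-separable decomposition of each $u_i$, and (ii) using that $\phi$ is $\mc H_i$-separable for the local H-graphs $\mc H_i$ of $\mc F_u$, together with minimality of $\mc H_\phi$, to dominate each hyperlink of $\mc F^{\mc H_\phi}$ by one of $\mc F_u$. Your write-up is slightly more explicit in routing the first inclusion through Remark~\ref{rem:local} and in justifying why mutual $\preceq$-comparability of simple FDH-graphs forces equality, but the underlying argument is the same.
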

\begin{proof}
As discussed in Section \ref{sec:games}, every potential game  $u$ in $\mc U$ is strategically equivalent to a game $u^{\phi}$ whose players have utilities  all equal to the potential function $\phi$. Since $\phi$ is $\mc H^{\phi}$-separable, with $\mc H^{\phi} = (\mc V, \mc L_{\phi})$, we can write
%
%
\be\label{uiphi}u_i(x)=\phi(x)+n_i(x_{-i})=\sum_{\mc K\in\mc L_{\phi}}\phi_{\mc K}(x_{\mc K})+n_i(x_{-i})\,,\ee
		for some function $n_i:\mc X_{-i}\to\R$.  This shows that $u$ is $\mc F^{\mc H_{\phi}}$-separable, therefore $\mc F_u\preceq\mc F^{\mc H_{\phi}}$.
		
Consider now the local H-graphs $\mc H_i=(\mc V,\mc L_i)$, for $i$ in $\mc V$, associated to the minimal FDH-graph $\mc F_u$ in the sense of \eqref{H-local}. By Remark \ref{rem:local}, every utility function $u_i$ is $\mc H_i$-separable and thus from the first equality in \eqref{uiphi} we get that also $\phi$ is $\mc H_i$-separable, for every $i$ in $\mc V$. By definition, every hyperlink $(i,\mc J)$ of the FDH-graph $\mc F^{\mc H_{\phi}}$ is such that $\{i\}\cup \mc J$ in $\mc L_{\phi}$ is an undirected hyperlink of $\mc H_{\phi}$. Since the potential function $\phi$ is $\mc H_i$-separable and $\mc H_{\phi}$ is the minimal H-graph of $\phi$, this implies that there exists $\mc K$ in $\mc L_i$ such that $\{i\}\cup \mc J\subseteq\mc K$. By the way the local H-graph $\mc H_i$ is defined, it follows that necessarily $(i,\mc K\setminus\{i\})$ in $\mc D_u$ is a hyperlink of the FDH-graph $\mc F_u$. We have thus just proved that $\mc F^{\mc H_{\phi}}\preceq\mc F_u$. The claim then follows as we had already shown that $\mc F_u\preceq\mc F^{\mc H_{\phi}}$. 
%
\qed\end{proof}

The above result implies the following relation between the minimal graph of a potential game and the separability of the potential function. Given an undirected graph $\mc G$ we denote by $\mc Cl(\mc G)$ the set of maximal cliques in $\mc G$, and let $\mc H^{\mc Cl}_{\mc G}=(\mc V,\mc Cl(\mc G))$ be the cliques H-graph of $\mc G$.  

\begin{corollary}\label{cor:potential} Let $u$ in $\mc U$ be a potential game with potential function $\phi$. Then, the minimal graph $\mc G_u$ associated with $u$ is undirected. Moreover, $u$ is a $\mc G$-game for an undirected graph $\mc G$,  if and only if its potential function $\phi$ is $\mc H^{\mc Cl}_{\mc G}$-separable.
\end{corollary}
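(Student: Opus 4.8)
The plan is to derive everything from Theorem \ref{theorem:potential}, which identifies $\mc F_u = \mc F^{\mc H_\phi}$, together with Corollary \ref{cor:minimal1}, which gives $\mc G_u = \mc G^{\mc F_u}$. First I would establish that $\mc G_u$ is undirected. Since $\mc F_u = \mc F^{\mc H_\phi}$ is, by construction, an undirected FDH-graph (it is of the form $\mc F^{\mc H}$ for the H-graph $\mc H = \mc H_\phi$), it satisfies the symmetry property $(i,\mc J)\in\mc D \Rightarrow (j,\{i\}\cup\mc J\setminus\{j\})\in\mc D$ for all $j\in\mc J$. Passing to $\mc G^{\mc F_u}$ via \eqref{GF}, this symmetry immediately translates into: $(i,j)\in\mc E^{\mc F_u}$ iff $(j,i)\in\mc E^{\mc F_u}$, i.e., $\mc G_u = \mc G^{\mc F_u}$ is undirected. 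That disposes of the first assertion.

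For the equivalence, I would argue as follows. A game $u$ is a $\mc G$-game for undirected $\mc G$ iff $\mc G_u \preceq \mc G$ (by minimality of $\mc G_u$ and the intersection-closure of graphicality noted after Definition \ref{def:graphical-game}). So the claim reduces to showing: for undirected $\mc G$, one has $\mc G_u \preceq \mc G$ iff $\phi$ is $\mc H^{\mc Cl}_{\mc G}$-separable. For the ``if'' direction: if $\phi$ is $\mc H^{\mc Cl}_{\mc G}$-separable then $\mc H_\phi \preceq \mc H^{\mc Cl}_{\mc G}$, hence $\mc F^{\mc H_\phi} \preceq \mc F^{\mc H^{\mc Cl}_{\mc G}}$; but the hyperlinks of $\mc F^{\mc H^{\mc Cl}_{\mc G}}$ have head sets contained in cliques of $\mc G$, so $\mc G^{\mc F^{\mc H^{\mc Cl}_{\mc G}}} \preceq \mc G$ — because any two nodes lying in a common clique of $\mc G$ are adjacent in $\mc G$. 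Chaining, $\mc G_u = \mc G^{\mc F_u} = \mc G^{\mc F^{\mc H_\phi}} \preceq \mc G^{\mc F^{\mc H^{\mc Cl}_{\mc G}}} \preceq \mc G$, so $u$ is a $\mc G$-game.

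For the ``only if'' direction: suppose $u$ is a $\mc G$-game, i.e. $\mc G_u \preceq \mc G$. I want to conclude $\phi$ is $\mc H^{\mc Cl}_{\mc G}$-separable, equivalently $\mc H_\phi \preceq \mc H^{\mc Cl}_{\mc G}$. Take any hyperlink $(i,\mc J)$ of $\mc F_u = \mc F^{\mc H_\phi}$; then $\{i\}\cup\mc J$ is a hyperlink of $\mc H_\phi$, and it suffices to show $\{i\}\cup\mc J$ is contained in a clique of $\mc G$. Now by \eqref{GF}, every node $j\in\mc J$ satisfies $(i,j)\in\mc E^{\mc F_u}$, hence $(i,j)\in\mc E$ since $\mc G_u\preceq\mc G$. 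But I need \emph{pairwise} adjacency of all nodes in $\{i\}\cup\mc J$, not just adjacency to $i$ — and this is exactly where I must use that $\mc F_u$ is undirected. Indeed, for $j,k\in\mc J$ with $j\ne k$, the symmetry of $\mc F_u$ gives $(j,\{i\}\cup\mc J\setminus\{j\})\in\mc D_u$, so $k$ is a head node of a hyperlink with tail $j$, whence $(j,k)\in\mc E^{\mc F_u}\subseteq\mc E$. Thus $\{i\}\cup\mc J$ is a clique of $\mc G$, so it is contained in some maximal clique, giving $\mc H_\phi\preceq\mc H^{\mc Cl}_{\mc G}$ and hence $\phi$ is $\mc H^{\mc Cl}_{\mc G}$-separable.

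The main obstacle is precisely the ``only if'' step: going from the pairwise/graphical condition $\mc G_u\preceq\mc G$ back to the hypergraph-level statement requires promoting a collection of pairwise adjacencies into a full clique, and this is only legitimate because Theorem \ref{theorem:potential} guarantees the extra rigidity that $\mc F_u$ is undirected (symmetric in the sense of \eqref{underlying-def}). Without that symmetry — for a general, non-potential game — the head set $\mc J$ of a hyperlink need not be a clique even when $u$ is graphical on $\mc G$, so the corollary genuinely relies on the potential structure. The rest is bookkeeping with the $\preceq$ relations and the correspondences \eqref{FG}, \eqref{GF}, \eqref{FH}, \eqref{HF}.
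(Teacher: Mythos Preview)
Your proposal is correct and follows essentially the same route as the paper: derive undirectedness of $\mc G_u$ from $\mc F_u=\mc F^{\mc H_\phi}$ via Corollary~\ref{cor:minimal1}, then prove each implication by passing between $\mc H_\phi$ and the clique H-graph $\mc H^{\mc Cl}_{\mc G}$ through the $\preceq$ relations. The only cosmetic difference is that in the ``only if'' direction the paper iterates directly over $\mc K\in\mc L_\phi$ and uses that $(i,\mc K\setminus\{i\})\in\mc D_u$ for every $i\in\mc K$, whereas you start from $(i,\mc J)\in\mc D_u$ and invoke the symmetry of $\mc F_u$ to get the remaining adjacencies; these are the same argument in different packaging.
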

\begin{proof} Consider the minimal FDH-graph $\mc F_u=(\mc V, \mc D_{u})$ of $u$
and the minimal H-graph $\mc H_\phi=(\mc V, \mc L_{\phi})$ of $\phi$.
It follows from Corollary \ref{cor:minimal1} and relation \eqref{potential-separable} that
$\mc G_u=(\mc V, \mc E_u)=\mc G^{\mc F_u}$ is the graph associated to the undirected FDH-graph $\mc F_u=\mc F^{\mc H_{\phi}}$ and thus it is itself undirected. 

Suppose now that $u$ is a $\mc G$-game for some undirected graph $\mc G=(\mc V,\mc E)$. Let $\mc K$ in $\mc L_{\phi}$. Then,  for every $i$ in $\mc K$,   \eqref{potential-separable} implies that $(i,\mc K\setminus\{i\})$ belongs to $\mc D_u$ and thus $(i,j)$ belongs to $\mc E_u$ for every $i\ne j$ in $\mc K$. This says that $\mc K$ is a clique in $\mc G_u$ and thus also in $\mc G$, since $\mc G_u\preceq\mc G$. Therefore, $\mc H_{\phi}\preceq \mc H^{\mc Cl}_{\mc G}$, thus showing that $\phi$ is $\mc H^{\mc Cl}_{\mc G}$-separable. 

Conversely, if $\phi$ is $\mc H^{\mc Cl}_{\mc G}$-separable, then necessarily $\mc H_{\phi}\preceq \mc H^{\mc Cl}_{\mc G}$, so that that every undirected hyperlink in $\mc L_{\phi}$ is contained in a clique of $\mc G$. By Corollary \ref{cor:minimal1}, the minimal graph $\mc G_u$ is the graph associated with the FDH-graph $\mc F_u$ and, by Theorem \ref{theorem:potential}, $\mc F_u=\mc F^{\mc H_{\phi}}$ is undirected. It then follows that $\mc G_u = \mc G^{\mc F^{\mc H_{\phi}}} \preceq \mc G^{\mc F^{\mc H^{\mc Cl}_{\mc G}}} = \mc G$, thus showing that $u$ is a $\mc G$-game. \qed
%
\end{proof}

The second part of Corollary \ref{cor:potential} is equivalent to Theorems 4.2 and 4.4 in \cite{Babichenko.Tamuz:2016}. In this paper, the authors prove their results relying on the Hammersley-Clifford theorem. Our proofs are instead self-contained and in the next subsection we actually show that 
the Hammersley-Clifford theorem can be derived from our results. 

In fact, we wish to emphasize that Theorem \ref{theorem:potential} is more informative than Corollary \ref{cor:potential}. Indeed, the latter does not relate the the minimal separability of a potential game with that of its corresponding potential function. This is evident, e.g., in the special case of a potential game $u$ that is pairwise separable with respect to an undirected graph $\mc G=(\mc V, \mc E)$. In this case,  \eqref{potential-separable} implies that the potential function $\phi$ is separable with respect to the H-graph $\mc H$ coinciding with $\mc G$, i.e., 
that it can be decomposed in a pairwise fashion
$$\phi(x)=\sum\limits_{(i,j)\in\mc E}\phi_{ij}(x_i, x_j)\,,$$
for some symmetric functions $\phi_{ij}(x_i, x_j)=\phi_{ji}(x_j, x_i)$.
This is in general a much finer decomposition than the one on the maximal cliques of $\mc G$.


\subsection{Markov random fields and the Hammersley-Clifford Theorem}
In this subsection we show how the celebrated Hammersley-Clifford Theorem on the structure of Markov random fields can be deduced from Corollary \ref{cor:potential}. 

Consider an undirected graph $\mc G=(\mc V,\mc E)$ and a vector of finite-valued random variables $X=(X_i)_{i\in\mc V}$ indexed by the nodes of $\mc G$. Denote by $\mc A_i$ the set where the random variable $X_i$ takes its values and put $\mc X=\prod_{i}\mc A_i$.
For every subset $\mc W\subseteq \mc V$, let $X_{\mc W}$ denote the subvector of $X$ consisting of the random variables $X_i$ with $i$ in $\mc W$. We shall refer to the random vector $X$ as positive if its probability distribution is equivalent to the product of the marginals, namely, if $\P(X=x)>0$ whenever $\P(X_i=x_i)>0$ for every $i$ in $\mc V$.


 We shall refer to the random vector $X$ as a Markov random field (with respect to $\mc G$) if, for every node $i$ in $\mc V$, $X_i$ and $X_{\mc V\setminus\mc N_i^\bullet}$ are conditionally independent given $X_{\mc N_i}$.\footnote{In the literature on probabilistic graphical models, this is referred to as the \emph{local} Markov property \cite[Ch.~3.1]{Lauritzen:1996}, which is known to be implied by the so-called \emph{global} Markov property and in turn to imply the so-called \emph{pairwise} Markov property. \cite[Proposition 3.4]{Lauritzen:1996}. The three Markov properties are in fact known to be all  equivalent to one another for positive random vectors \cite[Theorem 3.7]{Lauritzen:1996}.} 

\begin{theorem} Let $X$ be a positive Markov random field with respect to an undirected graph $\mc G=(\mc V,\mc E)$. Then, its probability distribution admits the following decomposition:
\be\P(X=x)=\prod\limits_{\mc C\in\mc Cl(\mc G)}\zeta_{\mc C}(x_{\mc C}),\quad\forall x\in\mc X\,,
\ee
where $\mc Cl(\mc G)$ is the family of maximal cliques of the graph $\mc G$.
\end{theorem}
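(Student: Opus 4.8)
The plan is to reduce the Hammersley-Clifford theorem to Corollary~\ref{cor:potential} by associating to the positive Markov random field $X$ a potential game whose potential function is, up to a non-strategic constant, $\log \P(X=x)$. Concretely, since $X$ is positive, $\P(X=x)>0$ for every $x$ in the (effective) strategy profile space, so we may define $\phi(x)=\log\P(X=x)$ and consider the game $u^\phi$ in $\mc U$ with $u_i^\phi(x)=\phi(x)$ for all players $i$. By construction this is a potential game with potential $\phi$. The key is then to show that $u^\phi$ is a $\mc G$-game, since then Corollary~\ref{cor:potential} immediately gives that $\phi$ is $\mc H^{\mc Cl}_{\mc G}$-separable, i.e.\ $\phi(x)=\sum_{\mc C\in\mc Cl(\mc G)}\psi_{\mc C}(x_{\mc C})$, and exponentiating yields the claimed factorization with $\zeta_{\mc C}=\exp\psi_{\mc C}$.

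First I would make precise the reduction to the positive case on a product space: since we only care about $x$ with $\P(X_i=x_i)>0$ for all $i$, we may restrict each $\mc A_i$ to the support of the marginal of $X_i$; positivity then guarantees $\P(X=x)>0$ on all of $\mc X=\prod_i\mc A_i$, so $\log\P(X=x)$ is well defined everywhere and $\phi\in\R^{\mc X}$. Next I would verify that $u^\phi$ is graphical with respect to $\mc G$, i.e.\ that for each player $i$ the utility $u_i^\phi(x)=\phi(x)$ decomposes as $v_i(x_i,x_{\mc N_i})+n_i(x_{-i})$ in the sense of Definition~\ref{def:graphical-game}. The natural choice is $v_i(x_i,x_{\mc N_i})=\log\P(X_i=x_i\mid X_{\mc N_i}=x_{\mc N_i})$ and $n_i(x_{-i})=\log\P(X_{\mc V\setminus\{i\}}=x_{-i})$, so that $\phi(x)=\log\P(X=x)=v_i(x_i,x_{\mc N_i})+n_i(x_{-i})$ follows from the chain rule $\P(X=x)=\P(X_i=x_i\mid X_{-i}=x_{-i})\,\P(X_{-i}=x_{-i})$ together with the local Markov property, which says exactly that $\P(X_i=x_i\mid X_{-i}=x_{-i})=\P(X_i=x_i\mid X_{\mc N_i}=x_{\mc N_i})$ depends on $x_{-i}$ only through $x_{\mc N_i}$. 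Here positivity is used to ensure all conditional probabilities are well defined and the identity holds for every $x$, not merely almost surely.

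With these two facts in hand the proof closes: $u^\phi$ is a potential game (potential $\phi$) that is graphical with respect to the undirected graph $\mc G$, so Corollary~\ref{cor:potential} yields $\phi$ is $\mc H^{\mc Cl}_{\mc G}$-separable, hence $\log\P(X=x)=\sum_{\mc C\in\mc Cl(\mc G)}\psi_{\mc C}(x_{\mc C})$ for suitable functions $\psi_{\mc C}$, and setting $\zeta_{\mc C}(x_{\mc C})=\exp(\psi_{\mc C}(x_{\mc C}))$ gives $\P(X=x)=\prod_{\mc C\in\mc Cl(\mc G)}\zeta_{\mc C}(x_{\mc C})$ for all $x\in\mc X$. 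The main obstacle is the graphicality verification: one must be careful that the local Markov property (conditional independence of $X_i$ and $X_{\mc V\setminus\mc N_i^\bullet}$ given $X_{\mc N_i}$) is correctly translated into the pointwise statement that $\P(X_i=x_i\mid X_{-i}=x_{-i})$ is a function of $(x_i,x_{\mc N_i})$ alone, and that positivity is genuinely needed to make all the conditional probabilities well defined and to let us take logarithms everywhere on $\mc X$ — this is precisely the step where the hypothesis that $X$ is positive cannot be dropped.
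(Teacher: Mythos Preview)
Your proof is correct and follows essentially the same route as the paper: define $\phi(x)=\log\P(X=x)$, set $u_i=\phi$, verify that the local Markov property makes $u$ a $\mc G$-game, and invoke Corollary~\ref{cor:potential} to obtain the clique decomposition of $\phi$. The only cosmetic difference is the particular splitting used to exhibit graphicality---you take $v_i=\log\P(X_i=x_i\mid X_{\mc N_i}=x_{\mc N_i})$ and $n_i=\log\P(X_{-i}=x_{-i})$, whereas the paper takes $v_i=\log\P(X_{\mc N_i^\bullet}=x_{\mc N_i^\bullet})$ and $n_i=\log\P(X_{\mc V\setminus\mc N_i^\bullet}=x_{\mc V\setminus\mc N_i^\bullet}\mid X_{\mc N_i}=x_{\mc N_i})$---but both are valid consequences of the same conditional independence and lead to the same conclusion.
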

\begin{proof} Without loss of generality we can assume that $\P(X_i=x_i)>0$ for every $i$ in $\mc V$ and $x_i$ in $\mc A_i$ so that, by the positivity assumption we have that $\P(X=x)>0$ for every $x$ in $\mc X$. Let \be\label{phi=logP}\phi(x)=\log \P(X=x)\,,\qquad\forall x\in\mc X\ee and consider the potential game $u=u^{\phi}$ in $\mc U$ with utility functions $u_i(x)=\phi(x)$ for every $i$ in $\mc V$. 

We shall now prove that $u$ is ${\mc G}$-graphical. Indeed, conditional independence implies that 
$$\begin{array}{rcl}\P(X=x)&=&\P(X_{\mc N_i}=x_{\mc N_i})\P(X_{\mc V\setminus \mc N_i}
=x_{\mc V\setminus \mc N_i}\,|\, X_{\mc N_i}=x_{\mc N_i})\\[7pt]
&=&\P(X_{\mc N_i}=x_{\mc N_i})\P(X_{i}=x_{i} \,|\, X_{\mc N_i}=x_{\mc N_i})\P(X_{\mc V\setminus\mc N^{\bullet}_i}=x_{\mc V\setminus\mc N^{\bullet}_i} \,|\, X_{\mc N_i}=x_{\mc N_i})\\[7pt]
&=&\P(X_{\mc N_i^{\bullet}}=x_{\mc N_i^{\bullet}})\P(X_{\mc V\setminus\mc N^{\bullet}_i}=x_{\mc V\setminus\mc N^{\bullet}_i} \,|\, X_{\mc N_i}=x_{\mc N_i})
\,,
\end{array}$$
 for every $i$ in $ \mc V$.
We can then write $u_i(x)=u_i^{\mc N_i}(x_i, x_{\mc N_i})+n_i(x_{-i})$ where $$u_i^{\mc N_i}(x_i, x_{\mc N_i})=u_i^{\mc N_i}(x_{\mc N_i^{\bullet}})=\log \P(X_{\mc N_i^{\bullet}}=x_{\mc N_i^{\bullet}})\, 
$$ only depends on the actions played by player $i$ and her neighbors in $\mc N_i$, while  
$$n_i(x_{-i})=\log \P(X_{\mc V\setminus\mc N^{\bullet}_i}=x_{\mc V\setminus\mc N^{\bullet}_i} \,|\, X_{\mc N_i}=x_{\mc N_i})$$ is a non strategic term. Thus $u$ is $\mc G$-graphical so that Corollary \ref{cor:potential} implies that  its potential function $\phi$ is $\mc H^{\mc Cl}_{\mc G}$-separable. Together with \eqref{phi=logP}, this yields the claim.
\qed\end{proof}

%

%

\section{The potential and harmonic components of a separable game}\label{sec:main-results2}

In this section, we finally consider a recent result presented in \cite{Candogan.ea:2011} regarding the decomposition of a general game and we investigate how the concept of separability interacts with this decomposition. 

We first introduce another game theoretic notion. 
A game $u $ in $ \mc U$ is referred to as \emph{harmonic} \cite{Candogan.ea:2011} if 	
\be\label{eq:harmonic}
\sum_{i \in \mathcal{V}} \sum_{y\sim_i x} [u_i(x) - u_i(y)] = 0\,, \ee
for every strategy profile $x$ in $\mc X$. 
In \cite[Theorem 4.1]{Candogan.ea:2011} it is shown that every game $u$ can be decomposed as a sum of three games
\be u=u_{\rm pot}+u_{\rm har}+n\ee
where  $u_{\rm pot}$ is a potential game, $u_{\rm har}$ is a harmonic game, and $n$ is a non-strategic game. This decomposition is unique up to non-strategic components in $u_{\rm pot}$ and $u_{\rm har}$. In particular, it is unique if we assume that $u_{\rm pot}$ and $u_{\rm har}$ are both normalized. 

The following result shows that in general the potential and  harmonic components of an $\mc F$-separable game are separable on the underlying undirected FDH-graph $\mc F^{\leftrightarrow}$, whose hyperlink set can be obtained from $\mc F$'s as specified in \eqref{underlying-def}. 

\begin{theorem}\label{theo:main}
	Let $u$ in $\mc U$ be a finite game that is $\mc F$-separable with respect to a FDH-graph $\mc F=(\mc V,\mc D)$. Then, $u_{\rm pot}$ and $u_{\rm har}$ are $\mc F^{\leftrightarrow}$-separable.

\end{theorem}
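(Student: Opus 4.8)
I would route the argument through the Helmholtz/Hodge description of the potential–harmonic decomposition on the game graph, and exploit that the linear operator extracting the potential function preserves ``dependence on a fixed set of coordinates''. Since $\mc F$-separability, $\mc F^{\leftrightarrow}$-separability, and the decomposition $u=u_{\rm pot}+u_{\rm har}+n$ are all invariant under strategic equivalence, I may assume from the outset that $u$ is normalized, so that $n\equiv 0$, $u=u_{\rm pot}+u_{\rm har}$ with both summands normalized, and $u_i(x)=\sum_{(i,\mc J)\in\mc D}u_i^{\mc J}(x_i,x_{\mc J})$ for every $i\in\mc V$.

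Next I would recall from \cite{Candogan.ea:2011} the flow representation: games modulo non-strategic equivalence are identified with edge flows on the game graph $\Gamma$, whose vertex set is $\mc X$ and which is the Cartesian product $\prod_{i\in\mc V}K_{|\mc A_i|}$ of complete graphs, the flow of $u$ being $D_u(p,q)=u_i(q)-u_i(p)$ for $p\sim_i q$. In this language potential games are exactly the exact (gradient) flows, while by \eqref{eq:harmonic} harmonic games are exactly the divergence-free flows; consequently a potential function of $u_{\rm pot}$ is, up to an additive constant, $\phi=\Delta^{+}\rho$, where $\Delta$ is the unnormalized graph Laplacian of $\Gamma$, $\Delta^{+}$ its Moore--Penrose pseudoinverse, and $\rho(x)=\sum_{i\in\mc V}\sum_{y\sim_i x}\big(u_i(x)-u_i(y)\big)$ is the divergence of $u$. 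Using the $\mc F$-separability of $u$, a one-line computation gives $\rho=\sum_{(i,\mc J)\in\mc D}\rho_{(i,\mc J)}$ with $\rho_{(i,\mc J)}(x)=|\mc A_i|\,u_i^{\mc J}(x_i,x_{\mc J})-\sum_{a\in\mc A_i}u_i^{\mc J}(a,x_{\mc J})$, a function of the coordinates in $\{i\}\cup\mc J$ only.

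The technical heart is a spectral remark about $\Delta^{+}$. Writing $\R^{\mc X}=\bigotimes_{i\in\mc V}\R^{\mc A_i}$ and $\R^{\mc A_i}=\mathbbm{1}_i\oplus W_i$ with $\mathbbm{1}_i$ the constants, one has $\Delta=\sum_{i\in\mc V}\Delta^{(i)}$, where $\Delta^{(i)}$ is the Laplacian of $K_{|\mc A_i|}$ acting on the $i$-th tensor factor; these operators commute and act as the scalar $\sum_{i\in T}|\mc A_i|$ on $\bigotimes_{i\in T}W_i\otimes\bigotimes_{i\notin T}\mathbbm{1}_i$. A function on $\mc X$ depends only on the coordinates in a set $\mc S$ iff it lies in $\bigoplus_{T\subseteq\mc S}\big(\bigotimes_{i\in T}W_i\otimes\bigotimes_{i\notin T}\mathbbm{1}_i\big)$, and $\Delta^{+}$ acts as a scalar on each such summand; hence $\Delta^{+}$ maps functions supported on $\mc S$ to functions supported on $\mc S$. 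Applying this to every $\rho_{(i,\mc J)}$ shows that $\phi=\sum_{(i,\mc J)\in\mc D}\Delta^{+}\rho_{(i,\mc J)}$ is $\mc H^{\mc F}$-separable, where $\mc H^{\mc F}=(\mc V,\{\{i\}\cup\mc J:(i,\mc J)\in\mc D\})$; in particular $\mc H_\phi\preceq\mc H^{\mc F}$.

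To conclude, $u_{\rm pot}$ is strategically equivalent to the game $u^{\phi}$ with all utilities equal to $\phi$, and $\phi$ is $\mc H^{\mc F}$-separable; reading off the decomposition of $\phi$ through \eqref{FH} (or, more quickly, applying Theorem \ref{theorem:potential} together with $\mc F_{u_{\rm pot}}=\mc F^{\mc H_\phi}\preceq\mc F^{\mc H^{\mc F}}$) shows that $u_{\rm pot}$ is $\mc F^{\mc H^{\mc F}}=\mc F^{\leftrightarrow}$-separable. Finally $u_{\rm har}=u-u_{\rm pot}$, and $u$ is itself $\mc F^{\leftrightarrow}$-separable because $\mc F\preceq\mc F^{\leftrightarrow}$ by \eqref{HFrelation}; since $\mc F^{\leftrightarrow}$-separability is manifestly preserved under differences of games, $u_{\rm har}$ is $\mc F^{\leftrightarrow}$-separable as well. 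I expect the spectral step to be the only genuine obstacle: one must recognize the product structure of the game graph and that the potential-extraction operator is diagonal with respect to the associated tensor grading --- everything else is bookkeeping and appeals to results already established.
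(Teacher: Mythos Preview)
Your argument is correct and takes a genuinely different route from the paper's.

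The paper proceeds by pure linearity and localization: it writes $u=\sum_{(i,\mc J)\in\mc D}u^{(i,\mc J)}$ where $u^{(i,\mc J)}$ is the one-player game with only utility $u_i^{\mc J}(x_i,x_{\mc J})$, restricts each such game to the player set $\{i\}\cup\mc J$, performs the potential--harmonic decomposition \emph{there}, extends the two pieces back to $\mc V$ by zero, checks that the extensions are still potential/harmonic, and sums. Since each restricted decomposition automatically only involves the coordinates in $\{i\}\cup\mc J$, the summed components are $\mc F^{\leftrightarrow}$-separable by inspection. No spectral input is needed, and Theorem~\ref{theorem:potential} is not invoked.

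Your approach instead computes the potential globally via $\phi=\Delta^{+}\rho$ and uses the product structure of the game graph to show that $\Delta^{+}$ preserves ``depending only on coordinates in $\mc S$'', hence $\phi$ is $\mc H^{\mc F}$-separable; you then feed this into Theorem~\ref{theorem:potential} to get $\mc F^{\leftrightarrow}$-separability of $u_{\rm pot}$ and subtract for $u_{\rm har}$. The trade-off: the paper's proof is more elementary and self-contained (it would work even without Section~\ref{sec:potential}), while yours gives an explicit structural reason --- the tensor diagonalization of $\Delta$ --- for why the potential inherits the hypergraph support, and makes transparent that $\phi$ itself, not just $u_{\rm pot}$, is $\mc H^{\mc F}$-separable.
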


\begin{proof}
We consider the decomposition \eqref{eq:utility-decomposition-separable} and for each $(i,\mc J)$ in $\mc D$ we define the auxiliary game
\be\label{eq:uih}
u^{(i,\mc J)}_i(x)=u_i^{\mc J}(x_i,x_{\mc J})\,,\qquad u^{(i,\mc J)}_j(x)=0\,,\qquad \forall j\in\mc V\setminus\{i\} \,.
\ee
This is a game where all players have zero utility except for player $i$, who has utility equal to the corresponding term $u_i^{\mc J}(x_i,x_{\mc J})$ in \eqref{eq:utility-decomposition-separable}.
We can write
\be\label{eq:A}u=\sum_{(i,\mc J)\in\mc D} u^{(i,\mc J)}\,,\ee
For each game $u^{(i,\mc J)}$ we now consider its restriction $\hat u$ to the set of players $\hat{\mc V}=\{i\}\cup\mc J$ and strategy profile set $\hat{\mc X}=\prod_{h\in \hat{\mc V}}\mc A_h$. Formally,
$$\hat u_h(x)=\left\{\begin{array}{ll} u_i^{\mc J}(x_i,x_{\mc J})\quad & {\rm if}\, h=i\\ 0\quad & {\rm otherwise} \,.\end{array}\right.$$
Consider now the potential and the harmonic part of $\hat u$, respectively, $\hat u_{\rm pot}$ and $\hat u_{\rm har}$ and extend them to $\mc V$ as games $u^{(i,\mc J)}_{\rm pot}$ and $u^{(i,\mc J)}_{\rm har}$ such that
$$(u^{(i,\mc J)}_{\rm pot})_h=(u^{(i,\mc J)}_{\rm har})_h=0\quad\forall h\not\in \{i\}\cup\mc J \,.$$
A direct check shows that $u^{(i,\mc J)}_{\rm pot}$ is potential, $u^{(i,\mc J)}_{\rm har}$ is harmonic and $u^{(i,\mc J)}=u^{(i,\mc J)}_{\rm pot}+u^{(i,\mc J)}_{\rm har}$ up to non-strategic terms. Then, up to non strategic terms,
$$u_{\rm pot}=\sum_{(i,\mc J)\in\mc D_u}u^{(i,\mc J)}_{\rm pot},\quad u_{\rm har}=\sum_{(i,\mc J)\in\mc D_u}u^{(i,\mc J)}_{\rm har}$$
Notice, finally, that since $u^{(i,\mc J)}_{\rm pot}$ and $u^{(i,\mc J)}_{\rm har}$ are separable with respect to the FDH-graph whose directed hyperlinks are of type $(h,\mc K)$ with $h$ in $\{i\}\cup \mc J$ and $\{h\}\cup\mc K=\{i\}\cup \mc J$, we have that 
$u^{(i,\mc J)}_{\rm pot}$ and $u^{(i,\mc J)}_{\rm har}$ are $\mc F^{\leftrightarrow}$-separable. This proves the result. \qed
\end{proof}

We conclude this section by studying two examples that show how for  a non-potential game $u$, its potential and harmonic components may display additional interdependancies among the players, as captured by the underlying undirected FDH-graph $\mc F_u^{\leftrightarrow}\succeq\mc F_u$. 

\begin{example}
\label{example: decomposition best shot}\begin{figure}
		\centering
		\includegraphics[width=0.3\linewidth]{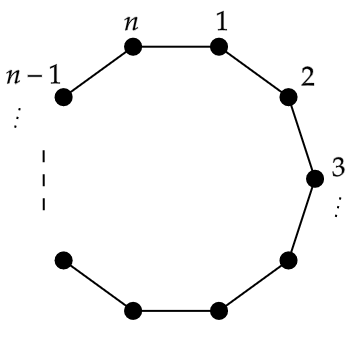}
		\caption{Ring graph of Example \ref{example: decomposition best shot} and \ref{example: decomposition strong coord}.}
		\label{fig: example dec best shot}
	\end{figure}
Consider the \emph{best-shot public good game} as introduced in Example \ref{example: public good}, with $n = |\mc V|$ players and defined on an undirected ring graph $\mc G= (\mc V, \mc E)$ with $n\ge3$ nodes, as shown in Figure \ref{fig: example dec best shot}. Thanks to the symmetric structure of the ring graph, we can express the utility function of player $i$ in $\mc V$ as
	\begin{equation}\label{eq: public good cycle}
	u_i(x) = \max\{x_{i-1},x_i,x_{i+1}\} - cx_i
	\end{equation}
	where the algebra on the indices is intended modulo $n$ and where $0\leq c \leq 1$ is the cost parameter.
	It is clear from the form of the utility in \eqref{eq: public good cycle} that the game $u$ is separable on the FDH-graph $\mc F = (\mc V, \mc D)$ with set of directed hyperlinks 
	$$
	\mc D = \left\lbrace (i, \{i-1, i+1\}), \,i \in \mc V \right\rbrace \,.
	$$
	In fact, as it is not possible to rewrite the $\max$ term in \eqref{eq: public good cycle} as the sum of three terms depending on $(x_{i-1},x_i)$, $(x_{i},x_{i+1})$, and $(x_{i-1},x_{i+1})$, respectively, the above can also be verified to be the minimal FDH-graph of $u$. 
	The utility functions of the normalized potential and harmonic components are given by
	\begin{equation}\label{equation: potential comp BS}
	u_{{\rm pot}, i}(x) = \left( |x_{i+1}-x_{i+2}| + |x_{i-1} - x_{i-2} |+ 4(x_{i+1}+x_{i-1}) - 2x_{i+1}x_{i-1} - 6(1-c) \right) \frac{1-2x_i}{12} \,,
	\end{equation}
	and, respectively, 
	\begin{equation}\label{equation: harmonc comp BS}
	u_{{\rm har}, i}(x) = \left( |x_{i+1}-x_{i+2}| + |x_{i-1} - x_{i-2}| - 2(x_{i+1}+x_{i-1})\right) \frac{1-2x_i}{12} \,,
	\end{equation}
for every player $i$ in $\mc V$.
	Notice that the harmonic component is independent from $c$, while the potential component contains an additive term that is linear on $c$ and that depends only on the action of player $i$ itself.
	In accordance with the decomposition result of Theorem \ref{theo:main}, equations \eqref{equation: potential comp BS} and \eqref{equation: harmonc comp BS} show that the normalized potential and harmonic components of $u$ are separable on the FDH-graph $\mc F^{\leftrightarrow} = (\mc V, \mc D^{\leftrightarrow})$ where
	$$
	\mc D^{\leftrightarrow} = \left\lbrace (i, \{i-2, i-1\}), (i, \{i-1, i+1\}), (i, \{i+1, i+2\}), \; i \in \mc V \right\rbrace \,.
	$$
	Accordingly, the potential function $\phi$ of $u_{\rm pot}$ is separable on the H-graph $\mc H^{\mc F^{\leftrightarrow}} = \mc H^{\mc F} = (\mc V, \mc L)$ with set of undirected hyperlinks
	$$
	\mc L = \left\lbrace \{ i-1, i, i+1\}, \, i \in \mc V \right\rbrace \,,
	$$
that  is displayed in Figure \ref{fig: hyper best shot}.
	\begin{figure}
		\centering
		\includegraphics[width=0.3\linewidth]{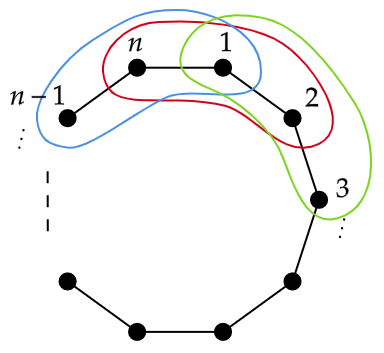}
		\caption{Hypergraph of the potential $\phi$ for Example \ref{example: decomposition best shot}.}
		\label{fig: hyper best shot}
	\end{figure}
	
\end{example}

\begin{example}\label{example: decomposition strong coord}
 For a graph $\mc G = (\mc V, \mc E)$, 	consider the game $u$ with player set $\mc V$, action set $\mc A_i=\{0,1\}$, and utilities	
	\begin{equation}
	u_i(x) = \begin{cases}
	1 \quad \text{if } x_i=x_j, \, \forall j \in \mc N_i \\
	0 \quad \text{otherwise}\,,
	\end{cases}
	\end{equation}
	for every player $i$ in $\mc V$. Such game coincide with the local interaction term of the two-level coordination game introduced in Example \ref{example:two-level coord} for the case $L=1$.
	Notice that, for a general graph $\mc G$, neither $u$ is a potential game nor it is pairwise separable on $\mc G$. We shall now study the structure of the game $u$ for two different graph topologies. 
	
When $\mc G$ is an undirected ring graph with $n$ nodes, as in Figure \ref{fig: example dec best shot}, then $u$ is  a potential game  and it is strategically equivalent to a multiple of the network coordination game on $\mc G$ introduced in Example \ref{example:coord.game} with $\zeta(x_i,x_j)= (-1)^{x_i-x_j}$, where the pairwise utilities are scaled by $\frac{1}{4}$. So, the normalized potential component $u_{\rm pot}$ is a network coordination game on $\mc G$, and it is pairwise separable on $\mc G$.
	
	
	\begin{figure}
		\centering
		\includegraphics[width=0.5\linewidth]{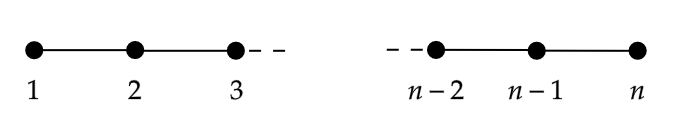}
		\caption{Line graph $L_n$ for Example \ref{example: decomposition strong coord}.}
		\label{fig:line}
	\end{figure}

	When $\mc G = L_n$ is an undirected line graph with $n $ nodes, as displayed in Figure \ref{fig:line}, then $u$ is not a potential game.
	The normalized potential component $u_{\rm pot}$ is a  weighted version of the network coordination game on the line $L_n$, as represented in Figure \ref{fig: weighted line}, where the pairwise utility function $\zeta(x_i,x_j)= (-1)^{x_i-x_j}$ is multiplied by different factors depending on the link $\{i,j\}$. 
	
	\begin{figure}
		\centering
		\includegraphics[width=0.5\linewidth]{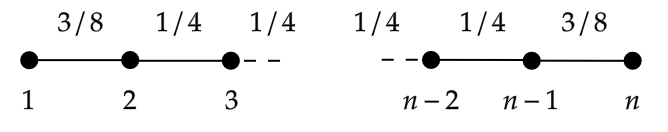}
		\caption{Line graph for the potential component of Example \ref{example: decomposition strong coord}. For each link $\{i,j\}$ the factor multiplying the pairwise utility $\zeta(x_i,x_j)$ is specified.}
		\label{fig: weighted line}
	\end{figure}
	
	In particular,
	\begin{itemize}
		\item for the internal links, joining nodes from $2$ to $n-1$, the pairwise utility $\zeta$ is scaled by $\frac{1}{4}$,
		\item for the extremal links $\{1,2\}$ and $\{n-1,n\}$, the pairwise utility $\zeta$ is scaled by $\frac{3}{8}$.
	\end{itemize}

	\begin{figure}
		\centering
		\includegraphics[width=0.5\linewidth]{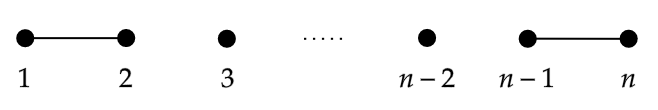}
		\caption{Graph for the harmonic component of Example \ref{example: decomposition strong coord}.}
		\label{fig: sub line}
	\end{figure}

	The harmonic component is pairwise separable on a graph with nodes $\mc V$ and with only two undirected links, namely $\{1,2\}$ and $\{n-1,n\}$, which is shown in Figure \ref{fig: sub line}. On such links two so-called matching pennies game take place, where nodes $1$ and $n$ aim to coordinate with their neighbors, while $2$ and $n-1$ aim to anti-coordinate. 
\end{example}

\section{Conclusion}\label{sec:conclusion} In this paper, we have introduced the notion of separability of a game with respect to a FDH-graph, encompassing and refining the notion of graphical game. Our proposed definition characterizes, up to a nonstrategic component, the way the utility functions can be split as the sum of functions depending on subgroups of players. We have shown that every game admits a minimal FDH-graph with respect to which it is separable. In the special case of potential games, we have shown that such a minimal FDH-graph enjoys a symmetry property and can in fact be identified with the undirected H-graph that describes the minimal separability structure of the potential function. This result  generalizes and refines one recently proved in \cite{Babichenko.Tamuz:2016} and, interestingly, it implies the celebrated Hammersley-Clifford Theorem for Markov random fields. Finally, we have investigated how the notion of separability for a game interacts with the potential-harmonic decomposition in \cite{Candogan.ea:2011} showing that in general additional interdependancies may emerge for the potential and harmonic components of a non-potential game. 

We believe that the potential implications of these fundamental results should be further investigated in a variety of directions among which: (i) studying how information on the minimal separating FDH-graph of a game can be used to lower the implementation complexity of learning algorithms; and (ii) understanding the implications of separability in the behavior of evolutionary and learning dynamics.

\section*{Acknowledgments.}
This work was partially supported by a MIUR grant ``Dipartimenti di Eccellenza 2018--2022'' [CUP: E11G18000350001], a MIUR  Research Project PRIN 2017 ``Advanced Network Control of Future Smart Grids'' (http://vectors.dieti.unina.it), the Swedish Research Council [2015-04066], and by the Compagnia di San Paolo. 

\bibliography{references}


\end{document}